\documentclass[12pt]{article}
\usepackage{amssymb,amsfonts,amsmath,amsthm}
\usepackage{braket}
\usepackage{booktabs}	% tables
\usepackage{float}		% tables customization
\usepackage{graphicx}		% figures
\usepackage{caption}		% subfigures customization
\usepackage{subcaption}	% subfigures customization

\title{Optimal Reinsurance and Investment in a Diffusion Model}
\author{\begin{minipage}{0.45\linewidth}
{\sc Matteo Brachetta$^1$}\\ \small Department of Economics\\ University of Pescara\\ Viale Pindaro, 42 \\ 65127 Pescara,
Italy 
\end{minipage}
\and
\begin{minipage}{0.45\linewidth}{\sc Hanspeter Schmidli$^2$}\\ \small Institute of
Mathematics\\ University of Cologne\\ Weyertal 86--90\\ 50931 K\"oln,
Germany\end{minipage}}

\makeatletter
\def\mathbox#1{\mbox{\let\@@beforemath\relax \m@th $#1$}}
\def\id{\;{\rm d}}
\def\dd{{\rm d}}
\def\E{{\rm I \mkern-2.5mu \nonscript\mkern-.5mu I \mkern-6.5mu E}}
\def\Prob{{\rm I \mkern-2.5mu \nonscript\mkern-.5mu I \mkern-6.5mu P}}
\def\R{{\rm I \mkern-2.5mu \nonscript\mkern-.5mu R}}

\def\e{{\rm e}}
\def\onehalf{\protect{\mathbox{\frac12}}}
\let\epsilon\varepsilon
\let\phi\varphi
\newcommand{\takis}{\usefont{OMS}{ztmcm}{m}{n}}
\newcommand{\nicecal}[1]{\text{\takis #1}}
\def\caldef#1{\expandafter\def\csname cal#1\endcsname{{\nicecal{#1}}}}
\caldef F
\def\pderiv#1#2{\frac{\partial #1}{\partial #2}}
\newif\ifpdf
\@ifundefined{pdfoutput}{\pdffalse}{\ifnum\pdfoutput<1 \pdffalse\else\pdftrue
\fi}
\makeatother

%environments
\newtheorem{theo}{Theorem}
\newtheorem{lemma}{Lemma}
\theoremstyle{plain}
\newtheorem{remark}{Remark}%[section]
\theoremstyle{plain}
\newtheorem{prop}{Proposition}%[section]
\theoremstyle{plain}
\newtheorem{corollary}{Corollary}%[section]

\begin{document}
\maketitle
\thispagestyle{empty}
\stepcounter{footnote}\footnotetext{e-mail: \tt matteo.brachetta@unich.it}

\stepcounter{footnote}\footnotetext{e-mail: \tt schmidli@math.uni-koeln.de}

\begin{abstract}\noindent
We consider a diffusion approximation to an insurance risk model where an
external driver models a stochastic environment. The insurer can buy reinsurance.
Moreover, investment in a financial market is possible. The financial market
is also driven by the environmental process. Our goal is to maximise terminal
expected utility. In particular, we consider the case of SAHARA utility
functions. In the case of proportional and excess-of-loss reinsurance, we
obtain explicit results.

\medskip

\noindent {Keywords:} {\sc Optimal reinsurance; optimal investment;
Hamil\-ton--Jacobi--Bellman equation; SAHARA utility; proportional reinsurance;
excess-of-loss reinsurance}

\medskip

\noindent {Classification:} Primary 91B30; secondary 60G44; 60J60; 93E20.

\end{abstract}
\newpage

%--------------------------------------------------------------------------------
%	INTRODUCTION
%--------------------------------------------------------------------------------

\section{Introduction}

The optimal reinsurance-investment problem is of large interest in the
actuarial literature. A reinsurance is a contract whereby a reinsurance
company agrees to indemnify the cedent (i.e. the primary insurer) against all
or part of the future losses that the latter sustains under the policies that
she has issued. For this service the insurer is asked to pay a premium. It is
well known that such a risk-sharing agreement allows the insurer to reduce the
risk, to increase the business capacity, to stabilise the operating results and so on.
In the existing literature there are a lot of works dealing with the optimal
reinsurance strategy, starting from the seminal papers \cite{definetti:1940},
\cite{buhlmann:1970} and \cite{gerber:1979}. During the last decades two
different approaches were used to study the problem: some authors model the
insurer's surplus as a jump process, others as a diffusion approximation (see
e.g. \cite{schmidli:2018risk} and references therein for details about
risk models). In addition, only two reinsurance agreements were considered:
the proportional and the excess-of-loss contracts (or both, as a mixed
contract). Among the optimization criteria, we recall the expected utility
maximization (see \cite{irgens_paulsen:optcontrol}, \cite{GUERRA2008529} and
\cite{Mania2010}), ruin probability minimization (see \cite{promislow2005},
\cite{schmidli:2001} and
\cite{schmidli2002}), dividend policy optimization
(see \cite{buhlmann:1970} and \cite{schmidli:control}) and others. In
particular, the former was developed only for CRRA and CARA utility
functions.

Our aim is to investigate the optimal reinsurance problem in a diffusion risk model when the insurer subscribes a general reinsurance agreement, with a retention level $u\in[0,I]$. The insurer's objective is to maximise the expected utility of the terminal wealth for a general utility function $U$, satisfying the classical assumptions (monotonicity and concavity). That is, we do not assume any explicit expression neither for the reinsurance policy nor for $U$. However, we also investigate how our general results apply to specific utility functions, including CRRA and CARA classes, and to the most popular reinsurance agreements such as proportional and excess-of-loss.

One additional feature of our paper is that the insurer's surplus is affected
by an environmental factor $Y$, which allows our framework to take into
account \textit{size} and \textit{risk fluctuations} (see \cite[Chapter
2]{grandell:risk}). We recall two main attempts of introducing a stochastic
factor in the risk model dynamic: in \cite{liangbayraktar:optreins} the
authors considered a Markov chain with a finite state space, while in
\cite{BC:IME2019}
$Y$ is a diffusion process, as in our case. However, they
considered jump processes and the rest of the model formulation is very
different (for instance, they restricted the maximization to the exponential
utility function and the proportional reinsurance). Moreover, in those papers
$Y$ only affects the insurance market.

Indeed, another important peculiarity of our model is the dependence between the insurance and the financial markets. We allow the insurer to invest her money in a risky asset, modelled as a diffusion process with both the drift and the volatility influenced by the stochastic factor $Y$. From the practical point of view, this characteristic reflects any connection between the two markets. From the theoretical point of view, we remove the standard assumption of the independence, which is constantly present in all the previous works, especially because it simplifies the mathematical framework.

The paper is organized as follows: in the following section we formulate our optimal stochastic control problem; next, in Section \ref{section:valuefun} we analyse the main properties of the value function, while in Section \ref{section:hjb} we characterize the value function as a viscosity solution to the Hamilton-Jacobi-Bellman (HJB) equation associated with our problem; in Section \ref{section:sahara} we apply our general results to the class of SAHARA utility functions, which includes CRRA and CARA utility functions as limiting cases. In addition, we characterize the optimal reinsurance strategy under the proportional and the excess-of-loss contracts, also providing explicit formulae. Finally, in Section \ref{section:numerical} we give some numerical examples.

%--------------------------------------------------------------------------------
%	MODEL FORMULATION
%--------------------------------------------------------------------------------

\section{The Model}

The surplus process of an insurer is modelled as the solution to the
stochastic differential equation
\[ \dd X_t^0 = m(t, Y_t, u_t) \id t + \sigma(t,Y_t, u_t) \id W_t^1\;,\qquad
X_0^0 = x\;,\]
where
$Y$ is an environmental process, satisfying
\[ \dd Y_t = \mu_Y(t,Y_t) \id t + \sigma_Y(t,Y_t) \id W_t^Y\;,\qquad Y_0 =
y\;,\]
and $u_t$ is the reinsurance retention level of the insurer at time
$t$. We assume that $u_t$ is a cadlag process and can take all the values in
an interval $[0,I]$,
where $I \in (0,\infty]$ and that the functions $m(t,y,u)$, $\sigma(t,y,u)$,
$\mu_Y(t,y)$ and $\sigma_Y(t,y)$
are continuously differentiable bounded functions satisfying a Lipschitz
condition uniformly in $u$. 
Further, the insurer has the possibility to invest into a risky asset $R$
modelled as the solution to
\[ \dd R_t = \mu(t,Y_t) R_t \id t + \sigma_1(t,Y_t) R_t \id W_t^1 +
\sigma_2(t,Y_t) R_t \id W_t^2\;,\quad R_0 \in(0,+\infty)\;.\]
Also the functions $\mu(t,y)$, $\sigma_1(t,y)$ and $\sigma_2(t,y)$ are assumed
to be bounded continuous positive functions satisfying a Lipschitz condition. We
further assume that $\sigma_1(t,y) + \sigma_2(t,y)$ is bounded away from
zero. Here, $W^1$, $W^2$, $W^Y$ are independent Brownian motions on a 
reference probability space $(\Omega, \calF, \Prob)$.
Thus, the reinsurance strategy does not influence the behaviour of the risky
asset. But, the surplus process and the risky asset are dependent. Choosing an
investment strategy $a$, the surplus of the insurer fulfils
\begin{eqnarray*}
\dd X_t^{u,a} &=& \{m(t,Y_t,u_t) + a_t \mu(t,Y_t)\} \id t + \{(\sigma(t,Y_t, u_t)
+a_t \sigma_1(t,Y_t))\} \id W_t^1\\ 
&& \hskip5.4cm {}+ a_t \sigma_2(t,Y_t) \id W_t^2\;, \qquad X_0^{u,a} = x\;.
\end{eqnarray*}
In order that a strong solution exists we assume that $\E[\int_0^T a_t^2 \id
t] < \infty$. 
Our goal is to maximise the terminal expected utility at time $T > 0$
\[ V^{u,a}(0,x,y) = \E[U(X_T^{u,a})\mid X_0^{u,a} = x, Y_0 = y]\;,\]
and, if it exists, to find the optimal strategy $(u^*, a^*)$. That is,
\[ V(0,x,y) = \sup_{u,a} V^{u,a}(0,x,y) = V^{u^*,a^*}(0,x,y)\;,\]
where the supremum is taken over all measurable adapted processes $(u,a)$ such
that the conditions above are fulfilled. $U$ is a
utility function. That is, $U$ is strictly increasing and strictly concave. We
make the additional assumption that $U''(x)$ is continuous. The
filtration is the smallest complete right-continuous filtration $\{\calF_t\}$
such that the Brownian motions are adapted. In particular, we suppose that $Y$
is observable.

We will also need the value functions if we do start at time $t$ instead. Thus
we define
\[ V^{u,a}(t,x,y) = \E[U(X_T^{u,a})\mid X_t^{u,a} = x, Y_t = y]\;,\]
where we only consider strategies on the time interval $[t,T]$
and, analogously, $V(t,x,y) = \sup_{u,a} V^{u,a}(t,x,y)$. The boundary
condition is then $V(T,x,y) = U(x)$. Because our underlying processes are
Markovian, $V(t,X_t^{u,a},Y_t)$ depends on $\calF_t$ via $(X_t^{u,a}, Y_t)$ only.

%--------------------------------------------------------------------------------
%	VALUE FUNCTION
%--------------------------------------------------------------------------------

\section{Properties of the value function}
\label{section:valuefun}

\begin{lemma}
\begin{enumerate}
\item The value function is increasing in $x$.
\item The value function is continuous.
\end{enumerate}

\end{lemma}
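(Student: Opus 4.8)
The plan rests on the observation that the drift and diffusion coefficients of $X^{u,a}$ depend on the state only through $(t,Y_t,u_t,a_t)$ and never through $X_t$ itself, while the admissible controls are adapted to the Brownian filtration and hence do not depend on the initial value $x$. Consequently, for a fixed strategy $(u,a)$ started at $(t,x,y)$ I can write $X_T^{u,a} = x + \xi_{t,y}^{u,a}$, where the increment $\xi_{t,y}^{u,a}$ (the accumulated drift and diffusion terms over $[t,T]$) has a law depending on $(t,y)$ and on the strategy but not on $x$. Monotonicity (part i) is then immediate: for $x_1 \le x_2$ and any strategy, $U(x_1 + \xi_{t,y}^{u,a}) \le U(x_2 + \xi_{t,y}^{u,a})$ pathwise because $U$ is increasing, so $V^{u,a}(t,x_1,y) \le V^{u,a}(t,x_2,y)$, and taking the supremum over $(u,a)$ preserves the inequality.

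For continuity I would treat the $x$-direction and the $(t,y)$-directions separately. In $x$, fix $(t,y)$ and $x_1 < x_2$. Monotonicity already gives $V(t,x_1,y) \le V(t,x_2,y)$, so it remains to bound the gap from above. Using $\sup f - \sup g \le \sup(f-g)$ with a strategy that is $\epsilon$-optimal for the larger value $V(t,x_2,y)$, I reduce the gap to $\E[U(x_2 + \xi_{t,y}^{u,a}) - U(x_1 + \xi_{t,y}^{u,a})] + \epsilon$ for that strategy; concavity of $U$ bounds the integrand by $U'(x_1 + \xi_{t,y}^{u,a})(x_2 - x_1)$. Hence local Lipschitz continuity in $x$ follows once $\E[U'(x_1 + \xi_{t,y}^{u,a})]$ is controlled uniformly over the relevant strategies.

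In the $(t,y)$-direction I would rely on stability of the SDE flow. Because $\mu_Y$ and $\sigma_Y$ are bounded and Lipschitz, $Y^{t,y}$ depends continuously in $L^2$ on $(t,y)$; since $\xi_{t,y}^{u,a}$ is an integral functional of $Y$ and of the Brownian motions with bounded, Lipschitz coefficients, it too is $L^2$-continuous in $(t,y)$ for each fixed strategy. Continuity of $U$ together with a dominated-convergence argument then yields continuity of $V^{u,a}(\cdot,x,\cdot)$. Passing to the supremum, $V$ is lower semicontinuous as a supremum of the continuous functions $V^{u,a}$, and the matching upper semicontinuity follows by evaluating a near-optimal strategy at the perturbed point and invoking the same estimates; continuity up to the terminal boundary $V(T,\cdot,\cdot)=U$ is checked separately from $\xi_{t,y}^{u,a}\to 0$ as $t\uparrow T$.

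The principal obstacle throughout is that the investment control $a$ is only square-integrable and hence unbounded, so neither the moments of $\xi_{t,y}^{u,a}$ nor the quantities $\E[U'(x_1 + \xi_{t,y}^{u,a})]$ are controlled uniformly over \emph{all} admissible strategies. The resolution I would pursue is that, by concavity of $U$, a strategy injecting large variance through a large $a$ produces a strong mean-preserving spread and therefore a small (possibly $-\infty$) expected utility; such strategies are far from optimal and can be discarded when approximating the supremum. Restricting to near-optimal strategies carrying an a priori bound on $\E[\int_t^T a_s^2 \,\dd s]$ restores the uniform integrability needed to justify both the Lipschitz estimate in $x$ and the dominated convergence in $(t,y)$, after which the semicontinuity sandwich delivers joint continuity.
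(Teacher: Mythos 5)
Your part (i) is correct and is exactly why the paper dismisses monotonicity as ``clear'': the coefficients $m,\sigma,\mu,\sigma_1,\sigma_2$ never depend on the surplus itself, and the admissible controls are adapted to the Brownian filtration, so for a fixed open-loop strategy $X_T^{u,a}=x+\xi^{u,a}_{t,y}$ with $\xi^{u,a}_{t,y}$ independent of $x$, and monotonicity of $U$ transfers to $V$. Your continuity argument, however, takes a genuinely different route from the paper's: the paper applies It\^o's formula to $U(X_T^{u,a})$, takes expectations (asserting the stochastic integrals are martingales), and reads off continuity ``by the Lipschitz assumptions'' after taking the supremum, whereas you work with the translation structure, a concavity-based Lipschitz estimate in $x$, $L^2$-stability of the flow in $(t,y)$, and a semicontinuity sandwich. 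Your outline is more explicit about where the real difficulty sits --- uniformity over strategies --- which the paper's one-line conclusion quietly absorbs.

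But the step you propose to close that difficulty has a genuine gap. First, a strategy with large $a$ is \emph{not} a mean-preserving spread: since $\mu(t,y)>0$, the drift gains $a\,\mu(t,y)$, which grows linearly in $a$ alongside the quadratic growth of the variance. Whether the concavity penalty beats the drift gain depends on how fast $U'$ decays, and for a general strictly increasing, strictly concave $U$ it need not: take $U(x)=2x-\sqrt{1+x^2}$ (slopes $3$ at $-\infty$ and $1$ at $+\infty$) and a constant strategy $a\to\infty$; then $\E[U(X_T)]\geq 2\E[X_T]-\sqrt{1+\E[X_T^2]}$ grows without bound whenever $3\mu^2T>\sigma_1^2+\sigma_2^2$, so near-optimal strategies carry no a priori bound on $\E[\int_t^T a_s^2\id s]$ and the value function can even be infinite. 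Hence ``discard high-variance strategies by concavity'' fails at the claimed level of generality. Second, even granting the restriction $\E[\int_t^T a_s^2\id s]\le C$, this yields only $L^2$ control of $\xi^{u,a}_{t,y}$, which does not dominate $U'(x_1+\xi^{u,a}_{t,y})$ or the negative part of $U(x+\xi^{u,a}_{t,y})$ when $U$ is unbounded below: for exponential utility one needs exponential moments of $\xi^{u,a}_{t,y}$, which no $L^2$ bound on $a$ supplies. So both the Lipschitz estimate in $x$ and the dominated-convergence step in $(t,y)$ remain unjustified as written. To be fair, the paper's own proof leans on the same unproven integrability (its martingale assertion for $\int U'(X_s^{u,a})(\sigma+a_s\sigma_1)\id W^1_s$ is not automatic for unbounded $a$ and general $U$); closing your argument honestly would require an additional standing assumption, e.g.\ $U$ bounded, or moment conditions on admissible strategies tailored to the utility class actually used later (such as SAHARA).
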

\begin{proof}
That the value function is increasing in $x$ is clear. By It\^o's formula
\begin{eqnarray*}
U(X_T^{u,a}) &=& U(x) + \int_t^T [\{ m(s,Y_s,u_s) + a_s \mu(s,Y_s)\}
U'(X_s^{u,a})\\ 
&&\hskip5mm {}+ \onehalf\{(\sigma(s,Y_s,u_s) + a_s \sigma_1(s,Y_s))^2 +
a_s^2 \sigma_2^2(s,Y_s) \}U''(X_s^{u,a})] \id s\\ 
&&{}+ \int_t^T(\sigma(s,Y_s,u_s) +
a_s \sigma_1(s,Y_s))  \id W_s^1 + \int_t^T a_s \sigma_2(s,Y_s) \id W_s^2\;.
\end{eqnarray*}
Because the stochastic integrals are martingales by our assumptions
\begin{eqnarray*}
\E[U(X_T^{u,a})] &=& U(x) + \E\Bigl[\int_t^T [\{ m(s,Y_s,u_s) + a_s \mu(s,Y_s)\}
U'(X_s^{u,a})\\ 
&&\hskip5mm{} + \onehalf\{(\sigma(s,Y_s,u_s) + a_s \sigma_1(s,Y_s))^2\\ 
&&\hskip15mm{} + a_s^2 \sigma_2^2(s,Y_s) \}U''(X_s^{u,a})] \id s\Bigr]\;.
\end{eqnarray*}
Taking the supremum over the strategies we get the continuity by the Lipschitz
assumptions.
\end{proof}

\begin{lemma}
The value function is concave in $x$.
\end{lemma}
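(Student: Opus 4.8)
The plan is to prove concavity by the classical convex-combination-of-strategies argument, exploiting that the surplus dynamics do not depend on the current surplus. Fix $t$, $y$, two initial levels $x_1,x_2$, a weight $\lambda\in(0,1)$, and set $x=\lambda x_1+(1-\lambda)x_2$. For $\epsilon>0$ I would choose admissible strategies $(u^1,a^1)$ and $(u^2,a^2)$ on $[t,T]$ with $V^{u^i,a^i}(t,x_i,y)>V(t,x_i,y)-\epsilon$, and run both controlled surpluses $X^1,X^2$ on the same probability space, i.e. driven by the same $W^1$, $W^2$, $W^Y$. The key observation is that the coefficients $m$, $\sigma$, $\mu$, $\sigma_1$, $\sigma_2$ depend on $(s,Y_s)$ and on the controls but never on the surplus itself; hence the convex combination $\bar X_s:=\lambda X^1_s+(1-\lambda)X^2_s$ is again an It\^o process, with $\bar X_t=x$, whose drift and diffusion coefficients are the corresponding convex combinations of those of $X^1$ and $X^2$.

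First I would dispose of the utility side. Since $U$ is concave, $U(\bar X_T)\ge\lambda U(X^1_T)+(1-\lambda)U(X^2_T)$ pathwise, so taking expectations $\E[U(\bar X_T)]\ge\lambda V^{u^1,a^1}(t,x_1,y)+(1-\lambda)V^{u^2,a^2}(t,x_2,y)>\lambda V(t,x_1,y)+(1-\lambda)V(t,x_2,y)-\epsilon$. It therefore suffices to exhibit an admissible strategy started at $x$ whose terminal surplus $X^*_T$ satisfies $\E[U(X^*_T)]\ge\E[U(\bar X_T)]$; then $V(t,x,y)\ge\E[U(X^*_T)]$, and letting $\epsilon\downarrow0$ gives the claim.

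The construction, and the main obstacle, is the reinsurance part. Taking $a^*_s:=\lambda a^1_s+(1-\lambda)a^2_s$ keeps the investment strategy admissible (it is square-integrable) and already reproduces the $W^2$-coefficient of $\bar X$ together with the investment contributions to the drift and to the $W^1$-coefficient. Because $m(s,Y_s,u)$ and $\sigma(s,Y_s,u)$ are nonlinear in $u$, there is in general no single retention reproducing both the combined drift $\lambda m(\cdot,u^1)+(1-\lambda)m(\cdot,u^2)$ and the combined volatility $\lambda\sigma(\cdot,u^1)+(1-\lambda)\sigma(\cdot,u^2)$. I would instead select, measurably and pathwise, a retention $u^*_s$ with $\sigma(s,Y_s,u^*_s)=\lambda\sigma(s,Y_s,u^1_s)+(1-\lambda)\sigma(s,Y_s,u^2_s)$, which is possible since $u\mapsto\sigma$ is continuous and the right-hand side lies between $\sigma(\cdot,u^1)$ and $\sigma(\cdot,u^2)$. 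With this choice the diffusion coefficients of $X^*$ and $\bar X$ coincide, so $X^*_s-\bar X_s=\int_t^s\{m(r,Y_r,u^*_r)-\lambda m(r,Y_r,u^1_r)-(1-\lambda)m(r,Y_r,u^2_r)\}\dd r$ is of finite variation. This difference is nonnegative — whence $X^*_T\ge\bar X_T$ and $\E[U(X^*_T)]\ge\E[U(\bar X_T)]$ by monotonicity of $U$ — precisely when the retained drift is a concave function of the retained volatility along the reinsurance frontier, the property that one verifies for the proportional and excess-of-loss contracts treated later. I expect this step — reproducing the combined diffusion and dominating the combined drift with a single admissible retention — to be the crux; by comparison, the measurable selection of $u^*$ and the square-integrability of $a^*$ are routine consequences of the continuity and Lipschitz assumptions.
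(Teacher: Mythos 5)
Your construction founders exactly at the step you yourself flag as the crux, and the conditional clause there cannot be discharged: to dominate $\bar X$ you need, for each $(s,y)$ and each pair $u^1_s,u^2_s$, a single retention $u^*_s$ with $\sigma(s,Y_s,u^*_s)=\lambda\sigma(s,Y_s,u^1_s)+(1-\lambda)\sigma(s,Y_s,u^2_s)$ \emph{and} $m(s,Y_s,u^*_s)\ge\lambda m(s,Y_s,u^1_s)+(1-\lambda)m(s,Y_s,u^2_s)$, i.e.\ concavity of the retained drift as a function of the retained volatility along the frontier $u\mapsto(\sigma,m)$. But the lemma is stated under the paper's standing assumptions only — $m$ and $\sigma$ bounded, continuously differentiable, Lipschitz — with no convexity structure in $u$ whatsoever; concavity of $m$ and convexity of $\sigma$ in $u$ enter the paper only later (Lemma~\ref{lemma:sahara_concavity}, for uniqueness of the maximiser in the SAHARA section), and even those hypotheses do not yield your frontier-concavity without extra monotonicity of $m$. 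So as written your argument proves a strictly weaker, conditional statement. Nor is the natural repair under Lemma~\ref{lemma:sahara_concavity}'s hypotheses routine: taking $u^*_s=\lambda u^1_s+(1-\lambda)u^2_s$ gives $m(u^*_s)\ge\bar m_s$ but only $\sigma(u^*_s)\le\bar\sigma_s$, and passing from ``larger drift, smaller volatility'' to $\E[U(X^*_T)]\ge\E[U(\bar X_T)]$ is not a pathwise comparison — both processes are driven by the same $W^1$, so the residual martingale $\int_t^T(\bar\sigma_s-\sigma(s,Y_s,u^*_s))\,{\rm d}W^1_s$ is correlated with $X^*_T$ and cannot be killed by a conditional Jensen argument without genuine additional work (this is a second-order stochastic dominance comparison, not a coupling).

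The paper's proof uses a completely different mechanism, which also explains why no structure on $(m,\sigma)$ is needed: concavity is forced by the \emph{unbounded investment control}, not by the reinsurance frontier. If $V$ failed to be concave in $x$, one could touch $V$ from below at some $(t,x,y)$ by a test function $\phi$ with $\phi_{xx}(t,x,y)\ge0$ and $\phi_x(t,x,y)>0$; the supersolution half of Theorem~\ref{thm:vis.sol} then gives $0\ge\phi_t+\sup_{u,a}\{m+a\mu\}\phi_x+\frac12\{(\sigma+a\sigma_1)^2+a^2\sigma_2^2\}\phi_{xx}+\mu_Y\phi_y+\frac12\sigma_Y^2\phi_{yy}$, but since $\mu(t,y)>0$, $\phi_x>0$, and the coefficient of $a^2$ is $\frac12(\sigma_1^2+\sigma_2^2)\phi_{xx}\ge0$ with $\sigma_1+\sigma_2$ bounded away from zero, letting $a\to\infty$ drives the right-hand side to $+\infty$ — a contradiction. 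In short: the paper gets concavity from the financial market for free, in full generality, while your convex-combination route (the classical argument when there is no investment, or when coefficients are affine in the control) would require importing hypotheses the lemma does not make.
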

\begin{proof}
If the value function was not concave, we would find $x$ and a test
function $\phi$ with $\phi_{x x}(t,x,y) \ge 0$, $\phi_x(t,x,y) > 0$ and
$\phi(t',x',y') \le V(t',x',y')$ for all $t'$, $x'$, $y'$ and $\phi(t,x,y) =
V(t,x,y)$. By the proof of Theorem \ref{thm:vis.sol} below,
\begin{eqnarray*}
0 &\ge& \phi_t + \sup_{u,a} \{m(t,y,u) + a \mu(t,y)\} \phi_x\\ 
&&{}+ \onehalf \{(\sigma(t,y, u) + a \sigma_1(t,y))^2 +  a^2 \sigma_2^2(t,y)\}
\phi_{x x} + \mu_Y(t,y) \phi_y\\ 
&&{}+ \onehalf \sigma_Y^2(t,y) \phi_{y y}\;.
\end{eqnarray*}
But it is possible to choose $a$ such that the above inequality does not hold.
\end{proof}

%----------------------------------------------------------------------------------------
%	HJB EQUATION
%----------------------------------------------------------------------------------------

\section{The HJB equation}
\label{section:hjb}

We expect the value function to solve
\begin{eqnarray}\label{eq:hjb}
0 &=& V_t + \sup_{u,a} \{m(t,y,u) + a \mu(t,y)\} V_x\nonumber\\ 
&&{}+ \onehalf \{(\sigma(t,y, u) + a \sigma_1(t,y))^2 +  a^2 \sigma_2^2(t,y)\}
V_{x x} + \mu_Y(t,y) V_y\nonumber\\ 
&&{}+ \onehalf \sigma_Y^2(t,y) V_{y y}\;.
\end{eqnarray}
A (classical) solution is only possible if $V_{x x} < 0$. In this case,
\begin{equation}
\label{eqn:a_optimal}
a = - \frac{\mu(t,y) V_x + \sigma(t,y,u) \sigma_1(t,y) V_{x
x}}{(\sigma_1^2(t,y)+ \sigma_2^2(t,y)) V_{x x}}\;.
\end{equation}
Thus, we need to solve
\begin{eqnarray}\label{eqn:hjb2}
0 &=& V_t + \sup_{u} m(t,y,u) V_x - \frac{(\mu(t,y) V_x + \sigma(t,y,u)
\sigma_1(t,y) V_{x x})^2}{2(\sigma_1^2(t,y)+ \sigma_2^2(t,y)) V_{x x}}\nonumber\\ 
&&{}+ \frac{1}{2}\sigma^2(t,y,u) V_{x x}+ \mu_Y(t,y) V_y
+ \onehalf \sigma_Y^2(t,y) V_{y y}\;.
\end{eqnarray}
By our assumption that $m(t,y,u)$ and $\sigma(t,y,u)$ are continuous functions
on a closed interval of the compact set $[0,\infty]$, there is a value
$u(x,y)$ at which that supremum is 
taken.

\begin{theo}\label{thm:vis.sol}
The value function is a viscosity solution to \eqref{eq:hjb}.
\end{theo}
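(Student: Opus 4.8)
The plan is to deduce the viscosity property from the dynamic programming principle (DPP): for every $(t,x,y)$ and every $\{\calF_s\}$-stopping time $\tau$ with $t\le\tau\le T$,
\[ V(t,x,y)=\sup_{u,a}\E[V(\tau,X_\tau^{u,a},Y_\tau)\mid X_t^{u,a}=x,\,Y_t=y]\;. \]
I would first establish this identity in the standard way; the boundedness and Lipschitz continuity of the coefficients, the continuity of $V$ (Lemma 1) and the fact that on the finite horizon every constant pair $(u_0,a_0)\in[0,I]\times\R$ is admissible (since $\E[\int_0^T a_0^2\id t]=a_0^2T<\infty$) are the only ingredients needed below. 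From the DPP both one-sided inequalities follow by localising with the exit time $\tau_h=(t_0+h)\wedge\inf\{s>t_0:(X_s,Y_s)\notin B_\eta(x_0,y_0)\}$, applying It\^o's formula and using that, by the boundedness assumptions, the stochastic integrals are martingales.

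For the supersolution property, let $\phi$ be a test function such that $V-\phi$ has a local minimum at $(t_0,x_0,y_0)$ with $V(t_0,x_0,y_0)=\phi(t_0,x_0,y_0)$, so that $V\ge\phi$ on $B_\eta$. Fix a constant strategy $(u_0,a_0)$. The DPP gives $\phi(t_0,x_0,y_0)=V(t_0,x_0,y_0)\ge\E[\phi(\tau_h,X_{\tau_h},Y_{\tau_h})]$, and It\^o's formula together with continuity of the integrand on $B_\eta$ yields, after dividing by $\E[\tau_h-t_0]$ and letting $\eta\downarrow0$,
\[ 0\ge\phi_t+\{m(t_0,y_0,u_0)+a_0\mu\}\phi_x+\onehalf\{(\sigma+a_0\sigma_1)^2+a_0^2\sigma_2^2\}\phi_{x x}+\mu_Y\phi_y+\onehalf\sigma_Y^2\phi_{y y} \]
at $(t_0,x_0,y_0)$. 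Since this holds for every $(u_0,a_0)$, taking the supremum gives the desired inequality $0\ge\phi_t+\sup_{u,a}\{\cdots\}$.

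For the subsolution property, let $V-\phi$ have a local maximum at $(t_0,x_0,y_0)$ with equality, so $V\le\phi$ on $B_\eta$, and suppose towards a contradiction that $\phi_t+\sup_{u,a}\{\cdots\}\le-\delta<0$ at the point; by continuity the same bound, with $\delta/2$, holds on $B_\eta$ for $\eta$ small. By the DPP there is for each $\epsilon>0$ a strategy $(u^\epsilon,a^\epsilon)$ with $V(t_0,x_0,y_0)\le\E[V(\tau_h,X_{\tau_h},Y_{\tau_h})]+\epsilon\,\E[\tau_h-t_0]$. Combining this with $V\le\phi$, It\^o's formula, and the neighbourhood bound (using that the Hamiltonian evaluated at $(u^\epsilon,a^\epsilon)$ is dominated by the supremum) one obtains $0\le(-\delta/2+\epsilon)\,\E[\tau_h-t_0]$, which for $\epsilon<\delta/2$ is a contradiction. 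Hence $0\le\phi_t+\sup_{u,a}\{\cdots\}$.

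I expect the main obstacle to be the unboundedness of the investment control $a$, which enters the Hamiltonian in \eqref{eq:hjb} through a supremum over all of $\R$ that equals $+\infty$ as soon as $\phi_{x x}>0$. In the subsolution case this is harmless, since the inequality against $+\infty$ is automatic, but for the supersolution one must guarantee that the supremum is finite. This is underpinned by the concavity of $V$ in $x$ (Lemma 2), because a smooth function touching a concave function from below satisfies $\phi_{x x}\le0$; more to the point, the per-control inequalities derived above, being valid for \emph{every} $a_0\in\R$, are jointly consistent only when the supremum over $a$ is finite, which is precisely what the finiteness of the value function provides. A secondary technical point is the uniform control of the remainder and martingale terms under the localisation by $\tau_h$, which again rests on the boundedness of all coefficients and of the chosen constant strategies.
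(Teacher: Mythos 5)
Your proposal is correct and is in substance the paper's own argument: both sides derive the viscosity inequalities by applying It\^o's formula to the test function up to the exit time from a small ball capped at $h$, using boundedness of the coefficients (resp.\ a localisation sequence $\{t_n\}$ in the subsolution half) to make the stochastic integrals martingales, dividing by $h$, and using that $\Prob[\tau \le h]$ is negligible in the limit. The one structural difference is that you posit the dynamic programming principle as a standalone identity to be ``established in the standard way'', whereas the paper never states a DPP: it proves exactly the two one-sided inequalities it needs by hand --- for the supersolution half by concatenating an arbitrary constant control $(\bar u,\bar a)$ on $[0,\tau]$ with a $\delta$-optimal continuation, selectable measurably because $V$ is continuous, and for the subsolution half by running an $h^2$-optimal strategy and dominating the controlled generator by the Hamiltonian under the integral sign (your contradiction argument with a $\delta$-margin is an equivalent variant of this direct bound). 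Note that the concatenation and measurable-selection step is precisely where the real work of your ``standard'' DPP lives, so a complete write-up must supply it or cite it; the paper does this inline. Two corrections to your closing discussion: first, the finiteness worry in the supersolution half is vacuous --- once the pointwise inequality $0 \ge \phi_t + \{m(t,y,u_0)+a_0\mu\}\phi_x + \onehalf\{(\sigma+a_0\sigma_1)^2+a_0^2\sigma_2^2\}\phi_{xx} + \mu_Y\phi_y + \onehalf\sigma_Y^2\phi_{yy}$ holds for \emph{every} fixed $(u_0,a_0)$, the supremum over controls of quantities that are each $\le 0$ is automatically $\le 0$, which \emph{is} the supersolution inequality; finiteness of the Hamiltonian is a consequence, not a prerequisite. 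Second, you must not invoke the concavity of $V$ here: in the paper that concavity (Lemma 2) is itself proved \emph{from} the supersolution property established in this theorem, so your appeal to it is circular within the paper's logical order --- though, by the previous point, it is also unnecessary.
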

\begin{proof}
Without loss of generality we only show the assertion for $t = 0$.
Choose $(\bar u,\bar a)$ and $\epsilon, \delta, h > 0$. Let $\tau^{\bar u,
\bar a} = \inf\{ t > 0: \max\{|X_t^{\bar u, \bar a} - x|, |Y_t-y|\} >
\epsilon\}$ and $\tau = \tau^{\bar u, \bar a} \wedge h$. Consider the 
following strategy. $(u_t,a_t) = (\bar u, \bar a)$ for $t < \tau^{\bar u,
\bar a} \wedge h$, and $(u_t, a_t) = (\tilde u_{t-(\tau^{\bar u,
\bar a} \wedge h)}, \tilde u_{t-(\tau)})$ for some
strategy $(\tilde u, \tilde a)$, such that $V^{\tilde u, \tilde
a}(\tau, X_{\tau}^{\bar u,
\bar a},Y_{\tau^{\bar u, \bar a} \wedge
h}) > V(\tau, X_{\tau^{\bar u,
\bar a} \wedge h}^{\bar u, \bar a},Y_{\tau^{\bar u, \bar a} \wedge
h})-\delta$. Note that the strategy can be chosen in a measurable way since
$V(t,x,y)$ is continuous. Let $\phi(t,x,y)$ be a test function, such that
$\phi(t,x',y') \le V(t,x',y')$ with $\phi(0,x,y) = V(0,x,y)$. Then by It\^o's
formula
\begin{eqnarray*}
\phi(\tau, X_{\tau}^{\bar u,
\bar a},Y_{\tau}) &=& \phi(0,x,y) + \int_0^\tau [\phi_t(t,X_t, Y_t)\\ 
&&{}+ \{m(t, Y_t, \bar u) + \bar a \mu(t,Y_t)\} \phi_x(t,X_t, Y_t)\\ 
&&{}+ \onehalf\{ (\sigma(t,Y_t,\bar u) + \bar a \sigma_1(t,Y_t))^2 + \bar a^2
\sigma_2^2(t, Y_t)\} \phi_{x x}(t,X_t, Y_t)\\ 
&&{}+  \mu_Y(t,Y_t) \phi_y(t,X_t,Y_t) + \onehalf \sigma_Y^2(t,Y_t) \phi_{y
y}(t,X_t,Y_t)] \id t\\ 
&&{}+ \int_0^\tau [\sigma(t,Y_t,\bar u) + \bar a
\sigma_1(t,Y_t)] \phi_x(t,X_t, Y_t) \id W_t^1\\ 
&&{}+ \int_0^\tau \bar a \sigma_2(t,Y_t) \phi_x(t,X_t, Y_t)\id W_t^2\\
&&{}+ \int_0^\tau \sigma_Y(t,Y_t) \phi_y(t,X_t,Y_t) \id W_t^Y\;.
\end{eqnarray*}
Note that the integrals with respect to the Brownian motions are true
martingales since the derivatives of $\phi$ are continuous and thus bounded on
the (closed) area, and therefore the integrands are bounded. Taking expected
values gives
\begin{eqnarray*}
V(0,x,y) &\ge& V^{u,a}(0,x,y) = \E[V^{u,a}(\tau, X_\tau, Y_\tau)] \ge \E[V(\tau,
X_\tau, Y_\tau)] - \delta\\ 
&\ge& \E[\phi(\tau,X_\tau, Y_\tau)] - \delta\\
&=& V(0,x,y) -\delta + \E\Bigl[ \int_0^\tau [\phi_t(t,X_t, Y_t)\\ 
&&{}+ \{m(t, Y_t, \bar u) + \bar a \mu(t,Y_t)\} \phi_x(t,X_t, Y_t)\\ 
&&{}+ \onehalf\{ (\sigma(t,Y_t,\bar u) + \bar a \sigma_1(t,Y_t))^2 + \bar a^2
\sigma_2^2(t, Y_t)\} \phi_{x x}(t,X_t, Y_t)\\ 
&&{}+  \mu_Y(t,Y_t) \phi_y(t,X_t,Y_t) + \onehalf \sigma_Y^2(t,Y_t) \phi_{y
y}(t,X_t,Y_t)] \id t\Bigr]\;.
\end{eqnarray*}
The right hand side does not depend on $\delta$. We thus can let $\delta =
0$. This yields
\begin{eqnarray*}
0 &\ge& \E\Bigl[ \frac1h\int_0^\tau [\phi_t(t,X_t, Y_t)
+ \{m(t, Y_t, \bar u) + \bar a \mu(t,Y_t)\} \phi_x(t,X_t, Y_t)\\ 
&&{}+ \onehalf\{ (\sigma(t,Y_t,\bar u) + \bar a \sigma_1(t,Y_t))^2 + \bar a^2
\sigma_2^2(t, Y_t)\} \phi_{x x}(t,X_t, Y_t)\\ 
&&{}+  \mu_Y(t,Y_t) \phi_y(t,X_t,Y_t) + \onehalf \sigma_Y^2(t,Y_t) \phi_{y
y}(t,X_t,Y_t)] \id t\Bigr]\;.
\end{eqnarray*}
It is well known that $h \Prob[\tau \le h]$ tends to zero as $h
\downarrow 0$. Thus, letting $h \downarrow 0$ gives
\begin{eqnarray*}
0 &\ge& \phi_t + \{m(t,y,\bar u) + \bar a \mu(t,y)\} \phi_x\nonumber\\ 
&&{}+ \onehalf \{(\sigma(t,y, \bar u) + \bar a \sigma_1(t,y))^2 +  a^2
\sigma_2^2(t,y)\} \phi_{x x} + \mu_Y(t,y) \phi_y\nonumber\\ 
&&{}+ \onehalf \sigma_Y^2(t,y) \phi_{y y}\;.
\end{eqnarray*}
Since $(\bar u, \bar a)$ is arbitrary,
\begin{eqnarray*}
0 &\ge& \phi_t + \sup_{u,a} \{m(t,y,u) + a \mu(t,y)\} \phi_x\nonumber\\ 
&&{}+ \onehalf \{(\sigma(t,y, u) + a \sigma_1(t,y))^2 +  a^2 \sigma_2^2(t,y)\}
\phi_{x x} + \mu_Y(t,y) \phi_y\nonumber\\ 
&&{}+ \onehalf \sigma_Y^2(t,y) \phi_{y y}\;.
\end{eqnarray*}
Let now $\phi(t,x',y')$ be a test function such that $\phi(t,x',y') \ge
V(t,x',y')$ and $\phi(0,x,y) = V(0,x,y)$. Then there is a strategy $(u,a)$,
such that $V(0,x,y) < V^{u,a}(0,x,y) + h^2$. Choose a localisation sequence
$\{t_n\}$, such that 
\begin{eqnarray*}
&&\int_0^{\tau \wedge t_n \wedge t} [\sigma(s,Y_s,u_s) + a_s
\sigma_1(s,Y_s)] \phi_x(s,X_s^{u,a}, Y_s) \id W_s^1\;,\\  
&&\int_0^{\tau \wedge t_n \wedge t} a_s \sigma_2(s,Y_s) \phi_x(s,X_s^{u,a}, Y_s)\id
W_s^2\;,\\ 
\noalign{and} 
&&\int_0^{\tau\wedge t_n \wedge t} \sigma_Y(s,Y_s) \phi_y(s,X_s^{u,a},Y_s) \id W_s^Y
\end{eqnarray*}
are martingales, where as above, $\tau = \tau^{u, a} \wedge h$.
We have
\begin{eqnarray*}
\lefteqn{\phi(0,x,y) = V(0,x,y) \le V^{u,a}(0,x,y) + h^2}\hskip1cm\\ 
&=& \E[V(\tau\wedge t_n , X_{\tau \wedge t_n}, Y_{\tau \wedge t_n})] +
h^2 \le \E[\phi(\tau \wedge t_n, X_{\tau \wedge t_n}, Y_{\tau \wedge t_n})] +
h^2\\
&=& \phi(0,x,y) + \E\Bigl[ \int_0^{\tau \wedge t_n} [\phi_t(t,X_t, Y_t)\\ 
&&{}+ \{m(t, Y_t, u_t) + a_t \mu(t,Y_t)\} \phi_x(t,X_t, Y_t)\\ 
&&{}+ \onehalf\{ (\sigma(t,Y_t,u_t) + a_t \sigma_1(t,Y_t))^2 + a_t^2
\sigma_2^2(t, Y_t)\} \phi_{x x}(t,X_t, Y_t)\\ 
&&{}+  \mu_Y(t,Y_t) \phi_y(t,X_t,Y_t) + \onehalf \sigma_Y^2(t,Y_t) \phi_{y
y}(t,X_t,Y_t)] \id t\Bigr] + h^2\;.
\end{eqnarray*}
Because we consider a compact interval, we can let $n \to \infty$ and obtain
by bounded convergence
\begin{eqnarray*}
0 &\le& \E\Bigl[ \int_0^\tau [\phi_t(t,X_t, Y_t)\\ 
&&{}+ \{m(t, Y_t, u_t) + a_t \mu(t,Y_t)\} \phi_x(t,X_t, Y_t)\\ 
&&{}+ \onehalf\{ (\sigma(t,Y_t,u_t) + a_t \sigma_1(t,Y_t))^2 + a_t^2
\sigma_2^2(t, Y_t)\} \phi_{x x}(t,X_t, Y_t)\\ 
&&{}+  \mu_Y(t,Y_t) \phi_y(t,X_t,Y_t) + \onehalf \sigma_Y^2(t,Y_t) \phi_{y
y}(t,X_t,Y_t)] \id t\Bigr] + h^2\\
&\le& \E\Bigl[ \int_0^\tau \sup_{\bar u, \bar a}[\phi_t(t,X_t, Y_t)\\ 
&&{}+ \{m(t, Y_t, \bar u) + \bar a \mu(t,Y_t)\} \phi_x(t,X_t, Y_t)\\ 
&&{}+ \onehalf\{ (\sigma(t,Y_t,\bar u) + \bar a \sigma_1(t,Y_t))^2 + \bar a^2
\sigma_2^2(t, Y_t)\} \phi_{x x}(t,X_t, Y_t)\\ 
&&{}+  \mu_Y(t,Y_t) \phi_y(t,X_t,Y_t) + \onehalf \sigma_Y^2(t,Y_t) \phi_{y
y}(t,X_t,Y_t)] \id t\Bigr] + h^2\;.
\end{eqnarray*}
This gives by dividing by $h$ and letting $h \to 0$
\begin{eqnarray*}
0 &\le& \phi_t + \sup_{u,a} \{m(t,y,u) + a \mu(t,y)\} \phi_x\nonumber\\ 
&&{}+ \onehalf \{(\sigma(t,y, u) + a \sigma_1(t,y))^2 +  a^2 \sigma_2^2(t,y)\}
\phi_{x x} + \mu_Y(t,y) \phi_y\nonumber\\ 
&&{}+ \onehalf \sigma_Y^2(t,y) \phi_{y y}\;.
\end{eqnarray*}
This proves the assertion.
\end{proof}
% Let $\psi (x,y)$ be a positive twice continuously differentiable function with
% $\psi(t,x,y) = 0$ on $\R^3 \setminus (0,1)^3$ such that $\iiint_{(0,1)^3}
% \psi(t,x,y) \id t \id x \id y = 1$. We define $\phi_n(t,x,y) =
% \iiint_{(0,1)^3} V(t-s,x-z,y-v) \; n^3 \psi(n s, n z, n v) \id s \id z \id
% v$. Then $\phi_n(t,x,y)$ is twice continuously differentiable. Since $V$ is
% continuous, $\phi_n(t,x,y)$ converges to $V(t,x,y)$.

Let now $u^*(t,x,y)$ and $a^*(t,x,y)$ be the maximiser in
\eqref{eq:hjb}. By \cite[Sec.~7]{wagner} we can choose these maximisers in a
measurable way. We further denote by
$u_t^* = u^*(t,X_t^{u^*, a^*}, Y_t)$ and $a_t^* = a^*(t,X_t^{u^*, a^*}, Y_t)$
the feedback strategy. 
\begin{theo}\label{thm:opt.strat}
Suppose that $V$ is a classical solution to the HJB equation
\eqref{eq:hjb}. Suppose further that the strategy $(u^*, a^*)$ admits a unique
strong solution for $X^{u^*, a^*}$ and that $\{X_t^{u^*, a^*}\}$ is uniformly
integrable. Then the strategy $(u^*, a^*)$ is optimal.
\end{theo}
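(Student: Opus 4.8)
The plan is to prove a standard verification result: I will show that for the feedback strategy $(u^*,a^*)$ the process $V(t,X_t^{u^*,a^*},Y_t)$ is, after localisation, a genuine martingale, so that its expectation at the terminal time reproduces $V(0,x,y)$, whereas for every other admissible strategy the analogous process is a supermartingale. Concretely, write the second-order operator associated with a control pair $(u,a)$ as
$$\mathcal{L}^{u,a}\phi = \phi_t + \{m+a\mu\}\phi_x + \onehalf\{(\sigma+a\sigma_1)^2 + a^2\sigma_2^2\}\phi_{xx} + \mu_Y \phi_y + \onehalf \sigma_Y^2 \phi_{yy},$$
evaluated at $(t,Y_t)$ in the coefficients and at $(t,X_t,Y_t)$ in the derivatives. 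The HJB equation \eqref{eq:hjb} says precisely that $\sup_{u,a}\mathcal{L}^{u,a}V = 0$, hence $\mathcal{L}^{u,a}V \le 0$ for every $(u,a)$, with equality for the maximiser $(u^*,a^*)$.

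First I would fix an arbitrary admissible strategy $(u,a)$ and apply It\^o's formula to $V(t,X_t^{u,a},Y_t)$; because $V$ is assumed to be a classical (hence sufficiently smooth) solution this is legitimate. Introducing a localising sequence $\{t_n\}$ that turns the three stochastic integrals into true martingales, and taking expectations, the drift contributes $\E[\int_0^{T\wedge t_n}\mathcal{L}^{u,a}V\,\dd t]\le 0$, so that $\E[V(T\wedge t_n, X_{T\wedge t_n}, Y_{T\wedge t_n})]\le V(0,x,y)$. For the optimal feedback strategy the same computation gives equality, because $\mathcal{L}^{u^*,a^*}V = 0$ along the controlled path by the very definition of $u^*,a^*$ as maximisers.

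It then remains to pass to the limit $n\to\infty$. Using the continuity of $V$ and the boundary condition $V(T,x,y)=U(x)$, the integrand converges almost surely to $U(X_T)$. The crux of the argument is the interchange of limit and expectation, and this is exactly where the hypotheses enter: the existence of a unique strong solution guarantees that $X^{u^*,a^*}$ (and hence the feedback controls) is well defined, while the uniform integrability of $\{X_t^{u^*,a^*}\}$ is used to deduce the uniform integrability of $\{V(T\wedge t_n, X_{T\wedge t_n}, Y_{T\wedge t_n})\}_n$. Here I would exploit that $V$ is increasing and concave in $x$ (established above): concavity yields the affine upper bound $V(t,X,Y)\le V(t,x_0,Y) + V_x(t,x_0,Y)(X-x_0)$, whose right-hand side is uniformly integrable once $\{X_t^{u^*,a^*}\}$ is, and a matching lower bound comes from comparison with admissible strategies. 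Vitali's theorem (almost sure convergence together with uniform integrability) then gives $\E[U(X_T^{u^*,a^*})] = V(0,x,y)$, i.e. $V^{u^*,a^*}(0,x,y)=V(0,x,y)$.

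Finally, combining the two displays yields $V^{u,a}(0,x,y)\le V(0,x,y)=V^{u^*,a^*}(0,x,y)$ for every admissible $(u,a)$ --- equivalently, recalling that $V$ is the value function, the supremum is attained --- which is the assertion that $(u^*,a^*)$ is optimal. The main obstacle I anticipate is precisely the uniform-integrability step: one has to turn integrability of the state process into integrability of $V$ evaluated along it, controlling in particular the negative part of $V(T\wedge t_n,\cdot)$ so that Fatou/Vitali applies. The concavity and monotonicity of $V$, together with the finiteness $\E[\int_0^T (a_t^*)^2\,\dd t]<\infty$ inherent in admissibility, are the tools that make this step work.
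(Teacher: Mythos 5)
Your proposal is correct and follows essentially the same verification argument as the paper: apply It\^o's formula to $V(t,X_t^{u^*,a^*},Y_t)$, use the HJB equation to make the drift vanish along the optimal feedback control so the process is a local martingale, and upgrade it to a true martingale via an affine concavity bound combined with the uniform integrability of $\{X_t^{u^*,a^*}\}$ (the paper bounds $U(X_T)\le U(x)+(X_T-x)U'(x)$ where you instead use concavity of $V$ in $x$, an equivalent device), concluding $\E[U(X_T^{u^*,a^*})]=V(0,x,y)$. The only substantive difference is that your supermartingale half for arbitrary strategies is redundant in this setting, since $V$ here denotes the value function itself and $V^{u,a}\le V$ holds by definition of the supremum; the paper accordingly treats only the optimal feedback strategy.
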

\begin{proof}
By It\^o's formula we get for $X_t = X_t^{u^*,a^*}$
%Let $s_k = \inf\{t \le T: V(t, X_{t},Y_{t})< -k\}$. Then for $0 < v \le T$,
\begin{eqnarray*}
\lefteqn{V(t, X_t,Y_t) = V(0,x,y) +
\int_0^t [V_t(s,X_s, Y_s)}\hskip1.5cm\\
&&{}+ \{m(s, Y_s, u_s^*) + a_s^* \mu(s,Y_s)\} V_x(s,X_s, Y_s)\\ 
&&{}+ \onehalf\{ (\sigma(s,Y_s,u_s^*) + a_s^* \sigma_1(s,Y_s))^2 + {a_s^*}^2
\sigma_2^2(s, Y_s)\} V_{x x}(s,X_s, Y_s)\\ 
&&{}+  \mu_Y(s,Y_s) V_y(s,X_s,Y_s) + \onehalf \sigma_Y^2(s,Y_s) V_{y
y}(s,X_s,Y_s)] \id s\\ 
&&{}+ \int_0^t [\sigma(s,Y_s,u_s^*) + a_s^*
\sigma_1(s,Y_s)] V_x(s,X_s, Y_s) \id W_s^1\\ 
&&{}+ \int_0^t a_s \sigma_2(s,Y_s) V_x(s,X_s, Y_s)\id W_s^2\\
&&{}+ \int_0^t \sigma_Y(s,Y_s) V_y(s,X_s,Y_s) \id W_s^Y\\
&=& V(0,x,y) + \int_0^t a_s \sigma_2(s,Y_s) V_x(s,X_s, Y_s)\id W_s^2\\ 
&&{}+\int_0^t [\sigma(s,Y_s,u_s^*) + a_s^*
\sigma_1(s,Y_s)] V_x(s,X_s, Y_s) \id W_s^1\\
&&{}+ \int_0^t \sigma_Y(s,Y_s) V_y(s,X_s,Y_s) \id W_s^Y\;.
\end{eqnarray*}
Thus, $\{V(t, X_t,Y_t)\}$ is a local martingale. From $U(X_T) \le U(x) +
(X_T-x) U'(x)$ and the uniform integrability we get that $\{V(t, X_t,Y_t)\}$
is a martingale. We therefore have
\[ \E[U(X_T)] = \E[V(T, X_T, Y_T)] = V(0,x,y)\;.\]
This shows that the strategy is optimal.
\end{proof}
\begin{corollary}\label{cor:op.str}
Suppose that $V$ is a classical solution to the HJB equation
\eqref{eq:hjb}. Suppose further that the strategy $(u^*, a^*)$ admits a unique
strong solution for $X^{u^*, a^*}$ and that $\E[\int_0^T (a_t^*)^2 \id t] <
\infty$. Then the strategy $(u^*,a^*)$ is optimal.
\end{corollary}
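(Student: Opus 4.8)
The plan is to deduce the corollary directly from Theorem~\ref{thm:opt.strat}. The only place where uniform integrability of $\{X_t^{u^*,a^*}\}$ entered the proof of that theorem was to upgrade the local martingale $\{V(t,X_t^{u^*,a^*},Y_t)\}$ to a genuine martingale (via the linear bound $U(X_T)\le U(x)+(X_T-x)U'(x)$). Hence it suffices to show that the standing hypothesis $\E[\int_0^T (a_t^*)^2 \id t] < \infty$ already forces $\{X_t^{u^*,a^*}\}_{t\in[0,T]}$ to be uniformly integrable; the rest of the argument then runs verbatim.

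First I would write $X_t = X_t^{u^*,a^*}$ in integral form,
\[ X_t = x + \int_0^t \{m(s,Y_s,u_s^*) + a_s^* \mu(s,Y_s)\} \id s + \int_0^t (\sigma(s,Y_s,u_s^*) + a_s^* \sigma_1(s,Y_s)) \id W_s^1 + \int_0^t a_s^* \sigma_2(s,Y_s) \id W_s^2\;, \]
square, apply the elementary bound $(\sum_{i=1}^4 z_i)^2 \le 4\sum_{i=1}^4 z_i^2$, and take expectations. For the drift contribution I would use Cauchy--Schwarz in the form $(\int_0^t g_s \id s)^2 \le T\int_0^T g_s^2 \id s$, and for the two stochastic integrals the It\^o isometry, each of which produces a bound by an integral over the full interval $[0,T]$ and is therefore uniform in $t$.

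Next I would invoke the standing boundedness of $m,\sigma,\mu,\sigma_1,\sigma_2$: taking a common bound $K$, one has pointwise estimates such as $(m+a_s^*\mu)^2 \le 2K^2(1+(a_s^*)^2)$ and $(\sigma+a_s^*\sigma_1)^2 \le 2K^2(1+(a_s^*)^2)$, and similarly $(a_s^*\sigma_2)^2 \le K^2(a_s^*)^2$. Substituting these, every term is dominated by a constant multiple of $1 + \E[\int_0^T (a_s^*)^2\id s]$, so that
\[ \sup_{t\in[0,T]} \E\big[|X_t^{u^*,a^*}|^2\big] \le C\Big(1 + x^2 + \E\Big[\int_0^T (a_s^*)^2 \id s\Big]\Big) < \infty\;, \]
with $C$ depending only on $K$ and $T$. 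Since an $L^2$-bounded family is uniformly integrable (for $|X_t|>N$ one has $\E[|X_t|\one_{|X_t|>N}] \le N^{-1}\sup_t\E[|X_t|^2] \to 0$ uniformly in $t$), the family $\{X_t^{u^*,a^*}\}_{t\in[0,T]}$ is uniformly integrable, and Theorem~\ref{thm:opt.strat} yields optimality.

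I do not expect a genuine obstacle: this is a routine second-moment estimate, and all coefficients except $a^*$ itself are bounded, so the only unbounded contribution is controlled precisely by the hypothesis $\E[\int_0^T (a_t^*)^2 \id t] < \infty$. The one point meriting mild care is that the estimate concerns the realised feedback process $a_t^* = a^*(t,X_t^{u^*,a^*},Y_t)$ rather than an exogenously prescribed control; here I would simply rely on the assumed existence of a unique strong solution for $X^{u^*,a^*}$ together with the given square-integrability of $a^*$, so that no separate a priori bound on $X$ or fixed-point argument is required.
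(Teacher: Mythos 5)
Your proposal is correct and takes essentially the same route as the paper: the paper's own proof of Corollary~\ref{cor:op.str} is exactly the one-line observation that, since the coefficients are bounded, $\E[\int_0^T (a_t^*)^2 \id t] < \infty$ implies uniform integrability of $\{X_t^{u^*,a^*}\}$, so Theorem~\ref{thm:opt.strat} applies. Your second-moment estimate (integral form, squaring, Cauchy--Schwarz on the drift, It\^o isometry on the stochastic integrals, then $L^2$-boundedness implies uniform integrability) merely fills in the routine details the paper leaves implicit.
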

\begin{proof}
Since the parameters are bounded, the condition $\E[\int_0^T (a_t^*)^2 \id t] <
\infty$ implies uniform integrability of $\{X_t^{u^*,a^*}\}$. The result
follows from Theorem~\ref{thm:opt.strat}.
\end{proof}

%----------------------------------------------------------------------------------------
%	SAHARA UTILITY FUNCTIONS
%----------------------------------------------------------------------------------------

%\newpage
\section{SAHARA utility functions}
\label{section:sahara}

In this section we study the optimal reinsurance-investment problem when the
insurer's preferences are described by SAHARA utility functions. This class of
utility functions was first introduced by~\cite{chen:sahara} and it includes
the well known exponential and power utility functions as limiting cases. The
main feature is that SAHARA utility functions are well defined on the whole
real line and, in general, the risk aversion is non monotone.

More formally, we recall that a utility function $U\colon\R \to\R $ is of the SAHARA class if its absolute risk aversion (ARA) function $A(x)$ admits the following representation:
\begin{equation}
\label{eqn:arafun}
-\frac{U''(x)}{U'(x)} =: A(x)=\frac{a}{\sqrt{b^2+(x-d)^2}}\;, 
\end{equation}
where $a>0$ is the risk aversion parameter, $b>0$ the scale parameter and
$d\in\R $ the threshold wealth.

Let us try the ansatz
\begin{equation}
\label{eqn:ansatz_sahara}
V(t,x,y)=U(x)\tilde{V}(t,y)\;.
\end{equation}

\begin{remark}
By~\eqref{eqn:a_optimal} and~\eqref{eqn:ansatz_sahara}, the optimal investment strategy admits a simpler expression:
\begin{equation}
\label{eqn:a_sahara}
a^*(t,x,y) = \frac{\mu(t,y)-A(x)\sigma(t,y,u)\sigma_1(t,y)}{A(x)
(\sigma_1(t,y)^2+\sigma_2(t,y)^2)}\;.
\end{equation}
In particular, $a^*(t,x,y)$ is bounded by a linear function in $x$ and therefore our
assumption $\E[\int_0^T (a_t^*)^2 \id t] < \infty$ is fulfilled. Under our
hypotheses, if the HJB equation admits a classical solution, the assumptions
in Corollary~\ref{cor:op.str} are satisfied.
Let us note that $a^*(t,x,y)$ is influenced by the reinsurance strategy $u$.
\end{remark}

%observe that $V_x=\frac{U''(x)}{\beta(x)}\tilde{V}(t,y)$. The maximization w.r.t. $u\in[0,I]$ is
%\begin{multline*}
%\sup_{u}\biggl[m(t,y,u)\frac{U''(x)}{\beta(x)}\tilde{V}\\
%-\frac{
%\mu(t,y)^2\frac{U''(x)^2}{\beta(x)^2}\tilde{V}^2
%+\sigma(t,y,u)^2\sigma_1(t,y)^2U''(x)^2\tilde{V}^2
%+2\mu(t,y)\sigma(t,y,u)\sigma_1(t,y)\frac{U''(x)}{\beta(x)}\tilde{V}U''(x)\tilde{V}}
%{2[\sigma_1(t,y)^2+\sigma_2(t,y)^2]U''(x)\tilde{V}}\biggr].
%\end{multline*}
%Performing calculations, we obtain
%\[
%U''(x)\tilde{V}\sup_{u}\biggl[m(t,y,u)\frac{1}{\beta(x)}-\frac{
%\mu(t,y)^2\frac{1}{\beta(x)^2}
%+\sigma(t,y,u)^2\sigma_1(t,y)^2
%+2\mu(t,y)\sigma(t,y,u)\sigma_1(t,y)\frac{1}{\beta(x)}}
%{2[\sigma_1(t,y)^2+\sigma_2(t,y)^2]}\biggr],
%\]
%i.e.
%\begin{multline*}
%\frac{U''(x)\tilde{V}}{2[\sigma_1(t,y)^2+\sigma_2(t,y)^2]}\biggl[-\mu(t,y)^2\gamma(x)^2\\
%+\sup_{u}\biggl[2[\sigma_1(t,y)^2+\sigma_2(t,y)^2]m(t,y,u)\gamma(x)
%-\sigma(t,y,u)^2\sigma_1(t,y)^2
%-2\mu(t,y)\sigma(t,y,u)\sigma_1(t,y)\gamma(x)\biggr]\biggr],
%\end{multline*}
%or, equivalently,
%\begin{multline*}
%\frac{U''(x)\tilde{V}}{2[\sigma_1(t,y)^2+\sigma_2(t,y)^2]}\biggl[-\mu(t,y)^2\gamma(x)^2\\
%+\sup_{u}\biggl[2\biggl([\sigma_1(t,y)^2+\sigma_2(t,y)^2]m(t,y,u)-\mu(t,y)\sigma(t,y,u)\sigma_1(t,y)\biggr)\gamma(x)
%-\sigma(t,y,u)^2\sigma_1(t,y)^2\biggr]\biggr].
%\end{multline*}

In this case~\eqref{eqn:hjb2} reads as follows:
\begin{eqnarray*}
0 &=& U(x)\tilde{V}_t+ \mu_Y(t,y) U(x)\tilde{V}_y
+ \onehalf \sigma_Y^2(t,y) U(x)\tilde{V}_{yy} \\
&&{}+ U'(x)\tilde{V}(t,y)\sup_{u\in[0,I]}\Psi_{t,x,y}(u)\;,
\end{eqnarray*}
where
\begin{multline}
\label{eqn:psi_sahara}
\Psi_{t,x,y}(u)\doteq m(t,y,u)\\
+\frac{\mu(t,y)^2-2\mu(t,y)\sigma(t,y,u)\sigma_1(t,y)A(x)-\sigma(t,y,u)^2\sigma_2(t,y)^2A(x)^2}{2[\sigma_1(t,y)^2+\sigma_2(t,y)^2]A(x)}\;.
\end{multline}
By our assumptions, $\Psi_{t,x,y}(u)$ is continuous in $u$, hence it admits a
maximum in the compact set $[0,I]$. However, we need additional requirements to
guarantee the uniqueness.

\begin{lemma}
\label{lemma:sahara_concavity}
If $m(t,y,u)$ is concave in $u\in[0,I]$ and $\sigma(t,y,u)$ is non negative and convex in $u\in[0,I]$, then there exists a unique maximiser for $\sup_{u\in[0,I]}\Psi_{t,x,y}(u)$.
\end{lemma}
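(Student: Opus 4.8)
The plan is to treat $(t,x,y)$ as fixed parameters and to optimise only over $u\in[0,I]$, after first stripping $\Psi_{t,x,y}$ down to its essential $u$-dependence. Abbreviating $A=A(x)$, $\mu=\mu(t,y)$, $\sigma_i=\sigma_i(t,y)$, $m(u)=m(t,y,u)$ and $\sigma(u)=\sigma(t,y,u)$, I note that $A>0$ for a SAHARA utility and that $\mu,\sigma_1,\sigma_2>0$ by assumption, so that $\sigma_1^2+\sigma_2^2>0$. Expanding the quotient in \eqref{eqn:psi_sahara} and discarding the additive term $\mu^2/(2(\sigma_1^2+\sigma_2^2)A)$, which is independent of $u$ and hence irrelevant for the maximiser, reduces the problem to maximising
\[ \tilde\Psi(u)=m(u)-c_1\,\sigma(u)-c_2\,\sigma(u)^2,\qquad c_1=\frac{\mu\sigma_1}{\sigma_1^2+\sigma_2^2}>0,\quad c_2=\frac{\sigma_2^2 A}{2(\sigma_1^2+\sigma_2^2)}>0, \]
over the compact interval $[0,I]$.

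Next I would show that $\tilde\Psi$ is concave, term by term. The summand $m(u)$ is concave by hypothesis, and since $\sigma$ is convex and $c_1>0$ the summand $-c_1\sigma(u)$ is concave. The essential observation is that $\sigma(u)^2$ is convex: $\sigma\ge 0$ is convex and $v\mapsto v^2$ is convex and nondecreasing on $[0,\infty)$, so the composition $\sigma^2$ is convex; as $c_2>0$, the summand $-c_2\sigma(u)^2$ is therefore concave. A sum of concave functions is concave, so $\Psi_{t,x,y}$ is concave on $[0,I]$; being continuous on a compact interval it attains its supremum, and the set of maximisers is a nonempty closed subinterval of $[0,I]$.

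It remains to prove uniqueness, which is the delicate point. Writing (for twice-differentiable data, or otherwise via difference quotients)
\[ \tilde\Psi''(u)=m''(u)-c_1\sigma''(u)-2c_2\,\sigma'(u)^2-2c_2\,\sigma(u)\sigma''(u), \]
every summand is $\le 0$ because $m''\le 0$, $\sigma''\ge 0$, $\sigma\ge 0$ and $c_1,c_2>0$; moreover $-2c_2\sigma'(u)^2<0$ wherever $\sigma'(u)\neq 0$. Hence $\tilde\Psi$ is strictly concave away from any region on which $\sigma$ is locally constant, so that the only way the maximiser set could be a nondegenerate interval is if $\tilde\Psi''\equiv 0$ there, which forces $\sigma'\equiv 0$ and, together with the first-order condition $\tilde\Psi'\equiv 0$, also $m'\equiv 0$. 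The main obstacle is thus to exclude such a genuinely flat maximum, i.e.\ a nondegenerate subinterval on which \emph{both} $m$ and $\sigma$ are constant. I would rule this out via the non-degeneracy of the reinsurance treaties treated afterwards, where $\sigma(\cdot)$ is strictly increasing in the retention level: then $\sigma'>0$, whence $\tilde\Psi''<0$ throughout and the maximiser is unique. In the fully degenerate case the conclusion should be read up to this flat maximum, with strict concavity of either $m$ or $\sigma$ already sufficing for uniqueness.
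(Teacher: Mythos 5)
Your concavity argument is essentially the paper's: after discarding the $u$-independent term $\mu^2/\bigl(2(\sigma_1^2+\sigma_2^2)A\bigr)$, both proofs reduce the problem to showing that $c_1\sigma(u)+c_2\sigma(u)^2$ is convex, via composition of a nondecreasing convex quadratic on $[0,\infty)$ with the nonnegative convex function $\sigma$ --- the paper composes the whole quadratic form in one step, you split it into the linear and quadratic summands, which is the same computation. Where you genuinely diverge is on uniqueness, and there your treatment is more honest than the paper's. The paper asserts that concavity yields a unique maximiser and, in doing so, writes ``since $m(t,y,u)$ is strictly concave by hypothesis'' --- although the lemma assumes only that $m$ is concave. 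Your second-derivative analysis exposes the true situation: plain concavity gives a closed interval of maximisers; a nondegenerate such interval forces $\sigma'\equiv 0$ and $m'\equiv 0$ on it; and if $m$ and $\sigma$ are both constant in $u$, the statement as literally formulated fails, since every $u\in[0,I]$ is then a maximiser. So the strictness you invoke at the end (either $m$ strictly concave, or $\sigma$ strictly increasing, as it is in the proportional and excess-of-loss treaties used later) is exactly the hypothesis the paper's proof silently upgrades to; the paper itself half-concedes the point in the remark immediately following the lemma, where non-uniqueness is handled by a measurable selection. In short, your proof is correct and follows the same decomposition-plus-composition route, but it additionally identifies and patches a genuine gap in the paper's uniqueness claim rather than papering over it.
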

\begin{proof}
We prove that $\Psi_{t,x,y}(u)$ is the sum of two concave functions, hence it is concave itself. As a consequence, there exists only one maximiser in $[0,I]$. Now, since $m(t,y,u)$ is strictly concave by hypothesis, we only need to show that
\[
\sigma(t,y,u)^2\sigma_2(t,y)^2A(x)+2\mu(t,y)\sigma(t,y,u)\sigma_1(t,y)
\]
is convex in $u$. We know that this quadratic form is convex and increasing when the argument is non negative. Recalling that $\sigma(t,y,u)\ge0$ by hypothesis, we can conclude that the function above is convex, because it is the composition of a non decreasing and convex function with a convex function ($\sigma(t,y,u)$ is so, by assumption). The proof is complete.
\end{proof}
\begin{remark}
Uniqueness is not necessary. If $u^*(t,x,y)$ is not unique, we have to choose
a measurable version in order to determine an optimal strategy.
\end{remark}

%--------------------------------------------------------------------------------
%	PROPORTIONAL REINSURANCE
%--------------------------------------------------------------------------------

\subsection{Proportional reinsurance}

Let us consider the diffusion approximation to the classical risk model with
non-cheap proportional reinsurance, see e.g.{} \cite[Chapter 2]{schmidli:control}. More formally, 
\begin{equation}
\label{eqn:noncheap_model}
\id X^{0}_t=(p-q +qu)\id t + \sigma_0u\id W^1_t\;, \qquad X^{0}_0=x\;,
\end{equation}
with $p<q$ and $\sigma_0>0$. Here $I=1$.
From the economic point of view, the insurer transfers a proportion $1-u$ of her risks to the reinsurer (that is $u=0$ corresponds to full reinsurance).
In this case, by~\eqref{eqn:psi_sahara} our optimization problem reduces to
\begin{equation}
\label{eqn:noncheap_pb}
\sup_{u\in[0,1]} qu+\frac{\mu(t,y)^2-2\mu(t,y)\sigma_1(t,y)A(x)\sigma_0u-\sigma_2(t,y)^2A(x)^2\sigma_0^2u^2}{2[\sigma_1(t,y)^2+\sigma_2(t,y)^2]A(x)}\;.
\end{equation}

The optimal strategy is characterized by the following proposition.

\begin{prop}
\label{prop:sahara}
Under the model~\eqref{eqn:noncheap_model}, the optimal reinsurance-investment strategy is given by $(u^*(t,x,y),a^*(t,x,y))$, with
\begin{equation}
\label{eqn:u_sahara}
u^*(t,x,y)=
\begin{cases}
0 & \text{$(t,x,y)\in A_0$}\\
\frac{(\sigma_1(t,y)^2+\sigma_2(t,y)^2)q-\mu(t,y)\sigma_0\sigma_1(t,y)}{\sigma_0^2\sigma_2(t,y)^2A(x)}
& \text{$(t,x,y)\in (A_0\cup A_1)^C$} \\
1 & \text{$(t,x,y)\in A_1\;,$}
\end{cases}
\end{equation}
where
\begin{align*}
A_0&\doteq\Set{(t,x,y)\in [0,T]\times\R ^2 : q<\frac{\mu(t,y)\sigma_1(t,y)\sigma_0}{\sigma_1(t,y)^2+\sigma_2(t,y)^2}}\;,\\
A_1&\doteq\Set{(t,x,y)\in [0,T]\times\R ^2: q>\frac{\sigma_0[\sigma_2^2A(x)\sigma_0+\mu(t,y)\sigma_1(t,y)]}{\sigma_1(t,y)^2+\sigma_2(t,y)^2}}\;,
\end{align*}
and
\begin{equation}
\label{eqn:a_noncheap}
a^*(t,x,y)=\frac{\mu(t,y)-A(x)\sigma_0u^*(t,x,y)\sigma_1(t,y)}{A(x)(\sigma_1(t,y)^2+\sigma_2(t,y)^2)}\;.
\end{equation}
\end{prop}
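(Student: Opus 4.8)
The plan is to exploit the fact that, in the proportional model \eqref{eqn:noncheap_model}, the function $\Psi_{t,x,y}(u)$ appearing in \eqref{eqn:noncheap_pb} is an \emph{explicit quadratic} in $u$, so the inner optimisation reduces to elementary one-variable calculus. First I would record that here $m(t,y,u)=p-q+qu$ is affine (hence concave) in $u$ and $\sigma(t,y,u)=\sigma_0u$ is non-negative and affine (hence convex) in $u$, so Lemma~\ref{lemma:sahara_concavity} applies and already guarantees a unique maximiser. More concretely, reading off the coefficient of $u^2$ in \eqref{eqn:noncheap_pb}, namely $-\sigma_0^2\sigma_2(t,y)^2A(x)/(2[\sigma_1(t,y)^2+\sigma_2(t,y)^2])$, which is strictly negative since $A(x)>0$, $\sigma_0>0$ and $\sigma_2(t,y)>0$, I see that $\Psi_{t,x,y}$ is a strictly concave (downward) parabola in $u$.

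Next I would impose the first-order condition. Differentiating \eqref{eqn:noncheap_pb} with respect to $u$ and setting the derivative equal to zero gives the unconstrained maximiser
\[
\hat u(t,x,y)=\frac{(\sigma_1(t,y)^2+\sigma_2(t,y)^2)q-\mu(t,y)\sigma_0\sigma_1(t,y)}{\sigma_0^2\sigma_2(t,y)^2A(x)}\;,
\]
which is exactly the interior branch of \eqref{eqn:u_sahara}. Because $\Psi_{t,x,y}$ is strictly concave, the constrained maximiser over the compact interval $[0,1]$ is simply the projection of $\hat u$ onto $[0,1]$: it equals $\hat u$ when $\hat u\in[0,1]$, it equals $0$ when $\hat u<0$ (the parabola is then decreasing on $[0,1]$), and it equals $1$ when $\hat u>1$ (the parabola is then increasing on $[0,1]$). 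Translating the inequality $\hat u<0$ into a condition on the parameters — using that the denominator $\sigma_0^2\sigma_2(t,y)^2A(x)$ is positive, so only the numerator matters — yields precisely $q<\mu(t,y)\sigma_1(t,y)\sigma_0/(\sigma_1(t,y)^2+\sigma_2(t,y)^2)$, the defining inequality of $A_0$; likewise $\hat u>1$ rearranges exactly into the defining inequality of $A_1$. This establishes \eqref{eqn:u_sahara}.

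Finally, I would substitute $u=u^*(t,x,y)$ into the optimal-investment formula \eqref{eqn:a_sahara} (equivalently \eqref{eqn:a_optimal} under the ansatz \eqref{eqn:ansatz_sahara}), specialised to $\sigma(t,y,u)=\sigma_0u$, to obtain \eqref{eqn:a_noncheap}. Since the whole argument is elementary quadratic optimisation on a closed interval, there is no serious analytic obstacle; the only step requiring genuine care is the bookkeeping that matches the two threshold conditions $\hat u<0$ and $\hat u>1$ with the sets $A_0$ and $A_1$ as written.
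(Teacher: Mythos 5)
Your proposal is correct and follows essentially the same route as the paper: the paper likewise appeals to Lemma~\ref{lemma:sahara_concavity} for uniqueness, obtains the interior branch by solving $\partial\Psi_{t,x,y}(u)/\partial u=0$, and identifies $A_0$ and $A_1$ through the sign of the derivative at the endpoints $u=0$ and $u=1$ --- which, for a strictly concave quadratic, is exactly equivalent to your clipping of the unconstrained maximiser $\hat u$ to $[0,1]$ (and your formulation gets $A_0\cap A_1=\emptyset$ for free, where the paper verifies it by a separate inequality). Your direct check that the coefficient of $u^2$ is strictly negative (using $A(x)>0$, $\sigma_0>0$, $\sigma_2>0$) is a harmless and slightly cleaner substitute for citing the lemma, and the final substitution of $u^*$ into \eqref{eqn:a_sahara} to obtain \eqref{eqn:a_noncheap} is precisely the paper's concluding step.
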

\begin{proof}
The expression for $a^*(t,x,y)$ can be readily obtained by \eqref{eqn:a_sahara}.
By Lemma~\ref{lemma:sahara_concavity}, there exists a unique maximiser $u^*(t,x,y)$ for $\sup_{u\in[0,I]}\Psi_{t,x,y}(u)$, where $\Psi_{t,x,y}(u)$ is defined in~\eqref{eqn:psi_sahara} replacing $m(t,y,u)=p-(1-u)q$ and $\sigma(t,y,u)=\sigma_0u$. Now we notice that
\[
(t,x,y)\in A_0 \Rightarrow \frac{\partial \Psi_{t,x,y}(0)}{\partial u} <0\;,
\]
therefore full reinsurance is optimal. On the other hand,
\[
(t,x,y)\in A_1 \Rightarrow \frac{\partial \Psi_{t,x,y}(1)}{\partial u} >0\;,
\]
hence in this case null reinsurance is optimal. Now let us observe that
\[
(t,x,y)\in A_0 \Rightarrow q<\frac{\mu(t,y)\sigma_1(t,y)\sigma_0}{\sigma_1(t,y)^2+\sigma_2(t,y)^2}<\frac{\sigma_0[\sigma_2^2A(x)+\mu(t,y)\sigma_1(t,y)]}{\sigma_1(t,y)^2+\sigma_2(t,y)^2}\;,
\]
which implies $A_0\cap A_1=\emptyset$. Finally, when $(t,x,y)\in (A_0\cup A_1)^C$, the optimal strategy is given by the unique stationary point of $\Psi_{t,x,y}(u)$. By solving $\frac{\partial \Psi_{t,x,y}(u)}{\partial u}=0$, we obtain the expression in~\eqref{eqn:u_sahara}.
\end{proof}

\begin{remark}
The previous result holds true under the slight generalization of
$p(t,y)$, $q(t,y)$, $\sigma_0(t,y)$ dependent on time and on the environmental
process. In this case, there will be an additional effect of the exogenous factor $Y$.
\end{remark}
Proposition~\ref{prop:sahara} also holds in the case of an exponential utility
function.

\begin{corollary}
For $U(x)=-\e^{-\beta x}$ with $\beta>0$, the optimal strategy is given by
$(u^*(t,y),a^*(t,y))$, with
\begin{equation}
\label{eqn:u_exp}
u^*(t,y)=
\begin{cases}
0 & \text{$(t,y)\in A_0$}\\
\frac{(\sigma_1(t,y)^2+\sigma_2(t,y)^2)q-\mu(t,y)\sigma_0\sigma_1(t,y)}{\sigma_0^2\sigma_2(t,y)^2\beta}
& \text{$(t,y)\in (A_0\cup A_1)^C$} \\
1 & \text{$(t,y)\in A_1\;,$}
\end{cases}
\end{equation}
where
\begin{align*}
A_0&\doteq\Set{(t,y)\in [0,T]\times\R : q<\frac{\mu(t,y)\sigma_1(t,y)\sigma_0}{\sigma_1(t,y)^2+\sigma_2(t,y)^2}}\;,\\
A_1&\doteq\Set{(t,y)\in [0,T]\times\R : q>\frac{\sigma_0[\sigma_2^2\sigma_0\beta+\mu(t,y)\sigma_1(t,y)]}{\sigma_1(t,y)^2+\sigma_2(t,y)^2}}\;,
\end{align*}
and
\begin{equation}
\label{eqn:a_noncheap_exp}
a^*(t,y)=\frac{\mu(t,y)-\beta\sigma_0
u^*(t,y)\sigma_1(t,y)}{\beta(\sigma_1(t,y)^2+\sigma_2(t,y)^2)}\;.
\end{equation}
\end{corollary}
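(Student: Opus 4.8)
The plan is to recognise that the exponential utility is nothing but the degenerate SAHARA case in which the absolute risk aversion is constant, and then to re-run the argument of Proposition~\ref{prop:sahara} with this constant in place of $A(x)$. Concretely, for $U(x)=-\e^{-\beta x}$ one has $U'(x)=\beta\e^{-\beta x}$ and $U''(x)=-\beta^2\e^{-\beta x}$, so that the ARA function~\eqref{eqn:arafun} reduces to the constant $A(x)\equiv\beta>0$. This is exactly the limit of $a/\sqrt{b^2+(x-d)^2}$ obtained by letting $|d|\to\infty$ with $a/|d|\to\beta$, which is why the corollary follows from, rather than stands apart from, the SAHARA analysis.

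First I would verify that the separable ansatz~\eqref{eqn:ansatz_sahara} remains admissible. The reduction of~\eqref{eqn:hjb2} that produced~\eqref{eqn:psi_sahara} used only the identities $V_x=U'(x)\tilde{V}$ and $V_{xx}=U''(x)\tilde{V}$, valid for any $U$, so the same computation applies here. The one genuine simplification is that, since $A(x)\equiv\beta$ does not depend on $x$, the function $\Psi_{t,x,y}(u)$ of~\eqref{eqn:psi_sahara} no longer depends on $x$, and neither does its maximiser. Writing $\Psi^*(t,y):=\sup_{u}\Psi_{t,x,y}(u)$ and using $U'(x)=-\beta U(x)$, the reduced HJB equation collapses to the \emph{linear} parabolic equation
\[ 0=\tilde{V}_t+\mu_Y(t,y)\tilde{V}_y+\onehalf\sigma_Y^2(t,y)\tilde{V}_{yy}-\beta\,\Psi^*(t,y)\,\tilde{V}\;, \]
with terminal datum $\tilde{V}(T,y)=1$. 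Because the coefficients and $\Psi^*$ are bounded, Feynman--Kac yields a bounded, strictly positive classical solution $\tilde{V}(t,y)=\E[\exp(-\beta\int_t^T\Psi^*(s,Y_s)\id s)\mid Y_t=y]$, so that $V=U(x)\tilde{V}(t,y)$ is indeed a classical solution of~\eqref{eq:hjb}.

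Next I would perform the pointwise optimisation over $u$ exactly as in Proposition~\ref{prop:sahara}. For the model~\eqref{eqn:noncheap_model} one substitutes $m(t,y,u)=p-(1-u)q$ and $\sigma(t,y,u)=\sigma_0 u$ into~\eqref{eqn:psi_sahara}; since $\sigma_0 u$ is nonnegative and linear (hence convex) in $u$ and $m$ is concave in $u$, Lemma~\ref{lemma:sahara_concavity} applies verbatim, its only structural requirement being $A>0$, which $\beta>0$ supplies, and it delivers a unique maximiser. Evaluating the sign of $\partial\Psi/\partial u$ at the endpoints $u=0$ and $u=1$ produces the regions $A_0$ and $A_1$, while solving $\partial\Psi/\partial u=0$ in the interior gives the middle branch; because $\beta$ is a genuine constant, both regions are now subsets of $[0,T]\times\R$ and the only change relative to Proposition~\ref{prop:sahara} is the replacement of $A(x)$ by $\beta$, which is precisely~\eqref{eqn:u_exp}. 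The investment strategy~\eqref{eqn:a_noncheap_exp} follows from~\eqref{eqn:a_sahara} under the same substitution. Since the parameters are bounded and $u^*\in[0,1]$, the resulting $a^*(t,y)$ is in fact bounded, so $\E[\int_0^T(a_t^*)^2\id t]<\infty$ trivially and Corollary~\ref{cor:op.str} confirms optimality.

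I do not anticipate a serious obstacle, as the corollary is essentially a specialisation of Proposition~\ref{prop:sahara}. The one point demanding care is the admissibility of the separable ansatz for the exponential utility, that is, confirming that the reduced equation for $\tilde{V}$ genuinely possesses a classical solution so that the verification Corollary~\ref{cor:op.str} can be invoked; here this is actually easier than in the general SAHARA setting because the equation for $\tilde{V}$ is linear and Feynman--Kac applies directly. Were one instead to prefer a pure limiting argument, obtaining the corollary by passing $|d|\to\infty$ in Proposition~\ref{prop:sahara}, the delicate step would become the justification of interchanging that limit with the supremum over $u$ and with the stochastic control, which is why I favour the direct substitution above.
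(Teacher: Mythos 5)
Your proposal is correct and takes essentially the same route as the paper: the paper's entire proof is the observation that the exponential utility corresponds to the constant ARA case $A(x)\equiv\beta$, so that Proposition~\ref{prop:sahara} applies verbatim with $A(x)$ replaced by $\beta$, the calculations unchanged, and the resulting strategy independent of the wealth level $x$. Your additional verification layer --- noting that with constant ARA the function $\Psi_{t,x,y}(u)$ loses its $x$-dependence so the separable ansatz is fully consistent, that the reduced equation for $\tilde{V}$ becomes linear with terminal datum $\tilde{V}(T,y)=1$ and admits the bounded positive Feynman--Kac solution $\tilde{V}(t,y)=\E\bigl[\exp\bigl(-\beta\int_t^T\Psi^*(s,Y_s)\id s\bigr)\mid Y_t=y\bigr]$, and that boundedness of $a^*(t,y)$ lets Corollary~\ref{cor:op.str} certify optimality --- is sound and in fact supplies the classical-solvability step that the paper's one-line proof leaves implicit.
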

\begin{proof}
By definition of the ARA function, the exponential utility function corresponds to the special case $A(x)=\beta$. Hence, we can apply Proposition~\ref{prop:sahara}, by replacing the ARA function. All the calculations remain the same, but the optimal strategy will be independent on the current wealth level $x$.
\end{proof}

%--------------------------------------------------------------------------------
%	XL REINSURANCE
%--------------------------------------------------------------------------------

\subsection{Excess-of-loss reinsurance}

Now we consider the optimal excess-of-loss reinsurance problem. The retention
level is chosen in the interval $u\in[0,+\infty]$ and for any future claim 
the reinsurer is responsible for all the amount which exceeds that threshold $u$. 
For instance, $u = \infty$ corresponds to no reinsurance. 
The surplus process without investment is given
by, see also \cite{EisSchm}
\begin{equation}\label{eqn:XL_model}
\id X^{0}_t= \Bigl(\theta\int_0^u \bar{F}(z)\id z- (\theta-\eta)\E[Z] \Bigr)\id t
+ \sqrt{\int_0^u2z\bar{F}(z)\id z}\id W^1_t\;, \quad X^{0}_0=x\;,
\end{equation}
where $\theta,\eta>0$ are the reinsurer's and the insurer's safety loadings,
respectively, and $\bar{F}(z) = 1 - F(z)$ is the tail of the claim size
distribution function.
In the sequel we require $\E[Z] < \infty$ and, for the sake of the simplicity of the presentation, that
$F(z)<1$ $\forall z\in[0,+\infty)$.
Notice also that it is usually assumed $\theta>\eta$. 
However, we do not exclude the so called \textit{cheap reinsurance}, that is $\theta=\eta$.

By~\eqref{eqn:psi_sahara}, we obtain the following maximization problem:
\begin{eqnarray}
\label{eqn:XL_pb}
\lefteqn{\sup_{u\in[0,\infty]}\theta\int_0^u\bar{F}(z)\id z}\nonumber\\
&&{}-\frac{2\mu(t,y)\sigma_1(t,y)\sqrt{\int_0^u2z\bar{F}(z)\id z}+\sigma_2(t,y)^2
A(x)\int_0^u2z\bar{F}(z)\id z}{2[\sigma_1(t,y)^2+\sigma_2(t,y)^2]}\;.\hskip1cm
\end{eqnarray}

\begin{prop}
\label{prop:xl_sahara}
Under the model~\eqref{eqn:XL_model}, suppose that the function in
\eqref{eqn:XL_pb} is strictly concave in $u$. There exists a unique maximiser
$u^*(t,x,y)$ given by
\begin{equation}
\label{eqn:XL_solution}
u^*(t,x,y)=
\begin{cases}
	0 & \text{$(t,y)\in A_0$}
	\\
	\hat{u}(t,x,y) & \text{$(t,y)\in [0,T]\times\R \setminus A_0\;$}
\end{cases}
\end{equation}
where 
\[
A_0\doteq\Set{(t,y)\in[0,T]\times\R : \theta \le \frac{2\mu(t,y)\sigma_1(t,y)}{\sigma_1(t,y)^2+\sigma_2(t,y)^2}}
\]
and $\hat{u}(t,x,y)$ is the solution to the following equation:
\begin{equation}
\label{eqn:XL_nullderivative}
\theta(\sigma_1(t,y)^2+\sigma_2(t,y)^2)=2\mu(t,y)\sigma_1(t,y)\Bigl(\int_0^u2z\bar{F}(z)\id z\Bigr)^{-\frac{1}{2}}u+\sigma_2(t,y)^2A(x)u\;.
\end{equation}
\end{prop}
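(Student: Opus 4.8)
The plan is to follow the proof of Proposition~\ref{prop:sahara}. Write $G_{t,x,y}(u)$ for the objective in \eqref{eqn:XL_pb}; by hypothesis it is strictly concave in $u$ on the compact interval $[0,\infty]$, hence it has a unique maximiser, which is located by combining the first-order condition in the interior with the sign of the one-sided derivative at the left endpoint $u=0$. Once $u^*$ is identified, the formula for $a^*(t,x,y)$ follows at once from \eqref{eqn:a_sahara}, exactly as in \eqref{eqn:a_noncheap}, so the argument concentrates entirely on $u^*$.

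First I would differentiate $G_{t,x,y}$ in $u$. Setting $Q(u)=\int_0^u 2z\bar F(z)\id z$, the fundamental theorem of calculus gives $\frac{d}{du}\int_0^u\bar F(z)\id z=\bar F(u)$ and $Q'(u)=2u\bar F(u)$, and the chain rule handles the term $\sqrt{Q(u)}$. A direct computation then shows that $\partial_u G_{t,x,y}(u)$ factors as $\bar F(u)$ times a bracketed expression; since $\bar F(u)>0$ for all $u$ by the standing assumption $F(z)<1$, the stationary points are exactly the zeros of that bracket, which after rearrangement is the equation \eqref{eqn:XL_nullderivative}.

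Next, strict concavity makes $\partial_u G_{t,x,y}$ strictly decreasing, so the maximiser sits at the left endpoint iff the right derivative there is nonpositive. To evaluate $\partial_u G_{t,x,y}(0^+)$ I would use that $\bar F(0)=1$, so that $Q(u)\sim u^2$ as $u\downarrow 0$ and hence $u/\sqrt{Q(u)}\to 1$, while the remaining term $\sigma_2(t,y)^2 A(x)\,u$ vanishes; the resulting threshold is precisely the inequality defining $A_0$. On $A_0$ the derivative is nonpositive throughout, so $G_{t,x,y}$ is decreasing and $u^*=0$. On $A_0^c$ the right derivative at $0$ is strictly positive; because $\sigma_2>0$ and $A(x)>0$ the bracket tends to $-\infty$ as $u\to\infty$, so the strictly decreasing, continuous derivative vanishes at a unique interior point $\hat u(t,x,y)$, which is the maximiser and solves \eqref{eqn:XL_nullderivative}.

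The step I expect to be the main obstacle is the boundary analysis at $u=0$: one must rigorously justify $Q(u)\sim u^2$ and the value of $\partial_u G_{t,x,y}(0^+)$ (an asymptotic, or L'H\^opital, argument resting on continuity of $\bar F$ and $\bar F(0)=1$), since this is exactly what produces the threshold defining $A_0$ and separates the full-reinsurance regime from the interior one. A secondary point is confirming that on $A_0^c$ the maximiser stays finite rather than escaping to $u=\infty$; this holds because $\theta\int_0^u\bar F(z)\id z$ saturates at $\theta\,\E[Z]<\infty$ while the penalty $\sigma_2(t,y)^2 A(x)\,u$ grows without bound, forcing the derivative to be eventually negative.
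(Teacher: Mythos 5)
Your proof is correct and follows essentially the same route as the paper's: differentiate the objective in \eqref{eqn:XL_pb}, factor out $\bar F(u)>0$, identify the stationarity condition \eqref{eqn:XL_nullderivative}, use L'H\^opital at $u=0$ (via $\int_0^u 2z\bar F(z)\id z\sim u^2$) to obtain the threshold defining $A_0$, show the bracket tends to $-\infty$ as $u\to\infty$, and invoke strict concavity for uniqueness of the interior maximiser. The only differences are cosmetic: on $A_0$ the paper uses the pointwise bound $\int_0^u 2z\bar F(z)\id z\le u^2$ to get a nonpositive derivative everywhere rather than your concavity argument, and it splits the $u\to\infty$ analysis into the cases $\E[Z^2]<\infty$ and $\E[Z^2]=\infty$, whereas your single observation that the term $\sigma_2(t,y)^2A(x)\,u$ in the bracket diverges already suffices --- just note that this unbounded penalty lives in the derivative, not in the objective itself, where the corresponding term $\sigma_2(t,y)^2A(x)\int_0^u2z\bar F(z)\id z$ saturates when $\E[Z^2]<\infty$, so your ``secondary point'' should be phrased at the derivative level, as your preceding paragraph in fact already does.
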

\begin{proof}
We first note that, using L'Hospital's rule,
\[ \lim_{u \to \infty} \frac{(\int_0^u \bar F(z) \id z)^2}{\int_0^u 2 z \bar
F(z) \id z} = 0\;.\]
%This implies that in the case $\E[Z] = \infty$ the supremum is not taken at infinity. 
%We can therefore below assume that either $\int_0^\infty \bar F(z) \id z < \infty$ or that $u$ is finite.
The derivative with respect to $u$ of the function in \eqref{eqn:XL_pb} is
\[ \Bigl(\theta - u \frac{2\mu(t,y)\sigma_1(t,y) \bigl(\int_0^u 2z\bar{F}(z)\id
z\bigr)^{-\frac{1}{2}} +
\sigma_2(t,y)^2A(x)}{\sigma_1(t,y)^2+\sigma_2(t,y)^2}\Bigr) \bar F(u)\;.\]
Consider the expression between brackets
\begin{equation}\label{eq:XL_pb-der}
\theta - u \frac{2\mu(t,y)\sigma_1(t,y) \bigl(\int_0^u 2z\bar{F}(z)\id
z\bigr)^{-\frac{1}{2}} +
\sigma_2(t,y)^2A(x)}{\sigma_1(t,y)^2+\sigma_2(t,y)^2}\;.
\end{equation}
Since $\int_0^u 2 z \bar F(z) \id z \le u^2$, we see that
for any $(t,y)\in A_0$ the function in~\eqref{eqn:XL_pb} is strictly
decreasing. Thus $u^* = 0$ in this case. For $(t,y)\notin A_0$ we obtain by L'Hospital's rule,
\[ \lim_{u \to 0} \frac{\int_0^u 2 z \bar F(z) \id z}{u^2} = 1\;.\]
This implies that the function to be maximised increases close to zero. In
particular, the maximum is not taken at zero. Further, if $\E[Z^2] < \infty$,
then \eqref{eq:XL_pb-der} tends to $-\infty$ as $u \to \infty$. If
$\E[Z^2] = \infty$, then
\[ \lim_{u \to \infty} \frac{\int_0^u 2 z \bar F(z) \id z}{u^2} = 0\;.\]
Thus also in this case, \eqref{eq:XL_pb-der} tends to $-\infty$ as $u \to
\infty$. Thus the maximum is taken in $(0,\infty)$, and uniqueness of
$\hat{u}(t,x,y)$ is guaranteed by the concavity. Now the proof is complete.
\end{proof}

\begin{corollary}
Under the assumptions of Proposition~\ref{prop:xl_sahara}, the optimal re\-in\-sur\-ance-investment strategy is given by
\[ \biggl(\frac{\mu(t,y)-A(x)\sigma_1(t,y)\sqrt{\int_0^{u^*(t,x,y)} 2 z
\bar{F}(z) \id z}}{A(x)(\sigma_1(t,y)^2+\sigma_2(t,y)^2)},\;u^*(t,x,y)\biggr)\;,
\]
with $u^*(t,x,y)$ given in~\eqref{eqn:XL_solution}.
\end{corollary}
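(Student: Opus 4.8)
The plan is to assemble the optimal pair from two ingredients already at hand — the reinsurance level produced by Proposition~\ref{prop:xl_sahara} and the closed-form investment map \eqref{eqn:a_sahara} — and then to certify optimality through the verification result Corollary~\ref{cor:op.str}.

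First I would record that for the excess-of-loss surplus \eqref{eqn:XL_model} the drift and diffusion coefficient of the uninvested surplus are $m(t,y,u) = \theta\int_0^u\bar F(z)\id z - (\theta-\eta)\E[Z]$ and $\sigma(t,y,u) = \sqrt{\int_0^u 2 z \bar F(z)\id z}$. Substituting these into $\Psi_{t,x,y}$ of \eqref{eqn:psi_sahara} reproduces exactly the objective \eqref{eqn:XL_pb}, so under the SAHARA ansatz \eqref{eqn:ansatz_sahara} the $u$-part of the pointwise maximisation in \eqref{eqn:hjb2} is solved by the $u^*(t,x,y)$ of \eqref{eqn:XL_solution} supplied by Proposition~\ref{prop:xl_sahara}; this is the second coordinate. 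For the first coordinate I would simply substitute $u = u^*(t,x,y)$ into \eqref{eqn:a_sahara}: because $\sigma(t,y,u^*) = \sqrt{\int_0^{u^*(t,x,y)} 2 z \bar F(z)\id z}$, the map $a^* = [\mu(t,y)-A(x)\sigma(t,y,u^*)\sigma_1(t,y)]/[A(x)(\sigma_1(t,y)^2+\sigma_2(t,y)^2)]$ collapses term by term to the displayed first coordinate. This is purely algebraic bookkeeping.

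The step I expect to be the main obstacle is certifying that $(u^*, a^*)$ is genuinely optimal and not merely a formal candidate; for this I would invoke Corollary~\ref{cor:op.str}, whose hypotheses are a unique strong solution for $X^{u^*,a^*}$ together with $\E[\int_0^T (a_t^*)^2\id t] < \infty$. The Remark following \eqref{eqn:a_sahara} already notes that $a^*$ grows at most linearly in $x$: writing $a^* = \mu/[A(x)(\sigma_1^2+\sigma_2^2)] - \sigma(t,y,u^*)\sigma_1/(\sigma_1^2+\sigma_2^2)$, the first summand is linear because $A(x)^{-1}$ is linear for SAHARA utility, while the subtracted term is bounded once one checks that $\sigma(t,y,u^*)$ stays controlled. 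Verifying this last boundedness is the delicate point: the stationarity equation \eqref{eqn:XL_nullderivative} forces the factor $\bigl(\int_0^{u^*} 2z\bar F(z)\id z\bigr)^{-1/2} u^*$ — which is nondecreasing in $u^*$ — to remain below $\theta(\sigma_1^2+\sigma_2^2)/(2\mu\sigma_1)$, thereby bounding $u^*$ and hence $\sigma(t,y,u^*)$. With the linear growth of $a^*$ established, square-integrability (and the attendant uniform integrability, the coefficients being bounded) holds, so Corollary~\ref{cor:op.str} applies and the displayed pair is optimal.
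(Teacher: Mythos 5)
Your proposal follows essentially the same route as the paper, which states this corollary without a separate proof precisely because it amounts to what you do: take $u^*(t,x,y)$ from Proposition~\ref{prop:xl_sahara}, substitute $\sigma(t,y,u^*)=\sqrt{\int_0^{u^*(t,x,y)} 2z\bar F(z)\id z}$ into the investment formula \eqref{eqn:a_sahara}, and certify optimality via Corollary~\ref{cor:op.str} exactly as in the remark following \eqref{eqn:a_sahara}. Your extra verification that the stationarity equation \eqref{eqn:XL_nullderivative} controls $\sigma(t,y,u^*)$ (so that $a^*$ retains linear growth in $x$ even when $\E[Z^2]=\infty$ and $\sigma(t,y,u)$ is unbounded in $u$) is a point the paper glosses over, and it correctly reinforces rather than alters the argument.
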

The main assumption of Proposition~\ref{prop:xl_sahara}, that is the concavity of the function in \eqref{eqn:XL_pb}, may be not easy to verify. In the next result we relax that hypothesis, only requiring the uniqueness of a solution to equation \eqref{eqn:XL_nullderivative}.

\begin{prop}
Under the model~\eqref{eqn:XL_model}, suppose that the equation \eqref{eqn:XL_nullderivative} admits a unique solution $\hat{u}(t,x,y)$ for any $(t,x,y)\in[0,T]\times\R ^2$. Then it is the unique maximiser to \eqref{eqn:XL_pb}.
\end{prop}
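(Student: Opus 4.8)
The plan is to recast the problem as a one-dimensional sign analysis of the derivative, exactly as in the proof of Proposition~\ref{prop:xl_sahara}, and then to exploit the uniqueness hypothesis to pin down both the location and the type of the critical point. First I would recall that the derivative with respect to $u$ of the objective in~\eqref{eqn:XL_pb} factorises as the product of the bracket~\eqref{eq:XL_pb-der}, which I denote by $h(u)$, and $\bar{F}(u)$. Since $\bar{F}(u)>0$ for every $u\in[0,\infty)$ by the standing assumption $F(z)<1$, the sign of the derivative coincides with the sign of $h(u)$, and the first-order condition~\eqref{eqn:XL_nullderivative} is precisely the equation $h(u)=0$. Thus every interior stationary point of~\eqref{eqn:XL_pb} is a zero of $h$, and conversely.

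Next I would record the boundary behaviour of $h$. Using the bound $\int_0^u 2z\bar{F}(z)\id z\le u^2$ already invoked in the proof of Proposition~\ref{prop:xl_sahara}, one sees that on $A_0$ the objective is strictly decreasing, so $h(u)<0$ for all $u>0$ and~\eqref{eqn:XL_nullderivative} has no solution there; hence the hypothesis of a (unique) solution $\hat{u}$ forces $(t,y)\notin A_0$. Off $A_0$, the L'Hospital limit $\lim_{u\to0}\int_0^u 2z\bar{F}(z)\id z/u^2=1$ yields $h(0^+)=\theta-2\mu(t,y)\sigma_1(t,y)/(\sigma_1(t,y)^2+\sigma_2(t,y)^2)>0$, while the computation in the previous proof gives $h(u)\to-\infty$ as $u\to\infty$.

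With these ingredients the conclusion follows from the intermediate value theorem. The function $h$ is continuous, strictly positive near $0$, strictly negative for large $u$, and by hypothesis vanishes at the single point $\hat{u}\in(0,\infty)$. If $h$ took a nonpositive value somewhere on $(0,\hat{u})$ it would, by continuity and positivity near $0$, have to vanish at a further point, contradicting uniqueness; hence $h>0$ on $(0,\hat{u})$, and the mirror argument (negativity for large $u$, no zero beyond $\hat u$) gives $h<0$ on $(\hat{u},\infty)$. Therefore the objective in~\eqref{eqn:XL_pb} is strictly increasing on $[0,\hat{u}]$ and strictly decreasing on $[\hat{u},\infty)$, so $\hat{u}$ is its unique global maximiser, which is the claim.

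The step I expect to be the crux is this last one: uniqueness of the stationary point does not by itself guarantee that it is a maximum rather than a degenerate horizontal inflection. The decisive point is that the two one-sided limits of $h$ have opposite strict signs, so the single zero must be a genuine sign change from $+$ to $-$; this is exactly where excluding $A_0$ (equivalently, $h(0^+)>0$) is essential, since on $A_0$ a solution of~\eqref{eqn:XL_nullderivative}, were it to exist, would be a point at which $h$ merely touches zero from below, and the true maximiser would then be the boundary value $u=0$. I would also note in passing that attainment of the supremum is automatic here, since $\hat{u}$ lies in the open interval and $h<0$ beyond it rules out any escape of the maximum to $u=\infty$.
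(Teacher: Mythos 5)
Your proof is correct and takes essentially the same route as the paper: the paper's proof is a one-line remark that the argument of Proposition~\ref{prop:xl_sahara} carries over because concavity was used only to guarantee uniqueness of the stationary point, and your sign analysis of the bracket \eqref{eq:XL_pb-der} --- strictly positive near $0$ off $A_0$, tending to $-\infty$ at infinity, so that the unique zero must be a genuine sign change from $+$ to $-$ --- is exactly a fleshed-out version of that remark. The only point you make explicit that the paper leaves implicit is the exclusion of a degenerate touching zero (a horizontal inflection), which your opposite one-sided limits handle correctly.
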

\begin{proof}
In the proof of Proposition~\ref{prop:xl_sahara} we only used the concavity to
verify uniqueness of the maximiser. Therefore, the same proof
applies.
% In this proof let us denote
% \[
% K=\mu(t,y)\sigma_1(t,y)\Bigl(\int_0^u2z\bar{F}(z)\id z\Bigr)^{-\frac{1}{2}}+\sigma_2(t,y)^2A(x)\;.
% \]
% We observe that $K>0$ by our model assumptions. Let us suppose that $\hat{u}(t,x,y)$ is the unique solution to the equation \eqref{eqn:XL_nullderivative}. 
% At this point, the second order derivative is equal to
% \begin{align*}
% &\frac{\partial \bar{F}(\hat{u})}{\partial u}\biggl[\theta-\frac{K\hat{u}}{\sigma_1(t,y)^2+\sigma_2(t,y)^2}\biggr]\\
% &-\frac{\bar{F}(\hat{u})}{\sigma_1(t,y)^2+\sigma_2(t,y)^2}
% \biggl[4\mu(t,y)\sigma_1(t,y)\bigl(\int_0^u2z\bar{F}(z)\id z\bigr)^{-\frac{3}{2}}\hat{u}^2\bar{F}(\hat{u})+K\biggr]<0\;,
% \end{align*}
% because the first term is null by \eqref{eqn:XL_nullderivative}. It implies that $\hat{u}(t,x,y)$ is the unique maximiser to \eqref{eqn:XL_pb}.
\end{proof}
%\begin{remark}
%Uniqueness is not necessary. If $u^*(t,x,y)$ is not unique, we have to choose
%a measurable version in order to determine an optimal strategy.
%\end{remark}

%--------------------------------------------------------------------------------
%	INDEPENDENT MARKETS
%--------------------------------------------------------------------------------

\subsection{Independent markets}

Suppose that the insurance and the financial markets are conditionally
independent given $Y$. That is, let $\sigma_1(t,x) = 0$. Then by~\eqref{eqn:a_sahara} we get
\[ a^*(t,x,y) = \frac{\mu(t,y)}{A(x) \sigma_2^2(t,y)}\;.\] 

\begin{remark}
Suppose that $\sigma(t,y,u)\ge0$ as usual. The insurer invests a larger amount of its surplus in the risky asset when the financial market is independent on the insurance market. Indeed, the reader can easily compare the formula above with \eqref{eqn:a_sahara}.
\end{remark}
Regarding the reinsurance problem, by \eqref{eqn:psi_sahara} we have to maximise this quantity:
\[ \Psi_{t,x,y}(u) := m(t,y,u)
+\frac{\mu(t,y)^2-\sigma(t,y,u)^2\sigma_2(t,y)^2A(x)^2}{2\sigma_2(t,y)^2
A(x)}\;.\]

\begin{prop}
Suppose that $\Psi_{t,x,y}(u)$ is strictly concave in $u\in[0,I]$. Then the optimal reinsurance strategy admits the following expression:
\begin{equation}
\label{eqn:independent_solution}
u^*(t,x,y)=
\begin{cases}
	0 & (t,x,y)\in A_0
	\\
	\hat{u}(t,x,y) & (t,x,y)\in [0,T]\times\R \setminus (A_0\cup A_I)
	\\
	I & (t,x,y)\in A_I\;,
\end{cases}
\end{equation}
where
\begin{align*}
A_0&\doteq\Set{(t,x,y)\in[0,T]\times\R ^2 : \pderiv{m(t,y,0)}u \le A(x)\sigma(t,y,0)\pderiv{\sigma(t,y,0)}u}\;,\\
A_I&\doteq\Set{(t,x,y)\in[0,T]\times\R ^2 : \pderiv{m(t,y,I)}u \ge A(x)\sigma(t,y,I) \pderiv{\sigma(t,y,I)}u}\;,
\end{align*}
and $\hat{u}(t,x,y)$ is the unique solution to
\[
\pderiv{m(t,y,u)}u = A(x) \sigma(t,y,u)\pderiv{\sigma(t,y,u)}u \;.
\]
\end{prop}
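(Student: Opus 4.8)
The plan is to follow the template of the proof of Proposition~\ref{prop:sahara}, reducing the problem to a one--dimensional strictly concave maximisation over the compact interval $[0,I]$ and locating the maximiser through the sign of the derivative $\partial\Psi_{t,x,y}/\partial u$ at the endpoints. First I would differentiate the objective. Since $\sigma_1(t,y)=0$, the term $\mu(t,y)^2/(2\sigma_2(t,y)^2A(x))$ does not depend on $u$ and drops out upon differentiation, leaving
\[
\pderiv{\Psi_{t,x,y}(u)}{u} = \pderiv{m(t,y,u)}{u} - A(x)\,\sigma(t,y,u)\,\pderiv{\sigma(t,y,u)}{u}\;.
\]
Setting the right--hand side equal to zero yields exactly the stationarity equation defining $\hat u(t,x,y)$, so any interior maximiser must solve that equation.

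Next I would exploit the standing hypothesis that $\Psi_{t,x,y}(\cdot)$ is strictly concave on $[0,I]$: this makes $\partial\Psi_{t,x,y}/\partial u$ continuous and \emph{strictly decreasing}, which is the single structural fact driving the whole argument. The maximiser on the compact interval is then pinned down by a three--way case distinction on the signs of the derivative at the two endpoints. If $(t,x,y)\in A_0$, then by definition $\partial\Psi_{t,x,y}(0)/\partial u\le0$; strict monotonicity forces $\partial\Psi_{t,x,y}/\partial u<0$ on all of $(0,I]$, so $\Psi_{t,x,y}$ is decreasing and $u^*=0$. Symmetrically, if $(t,x,y)\in A_I$, then $\partial\Psi_{t,x,y}(I)/\partial u\ge0$, the derivative is positive on $[0,I)$, and $u^*=I$. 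In the remaining case one has $\partial\Psi_{t,x,y}(0)/\partial u>0>\partial\Psi_{t,x,y}(I)/\partial u$, so by the intermediate value theorem together with strict monotonicity there is a unique interior zero of the derivative, namely $\hat u(t,x,y)$, and it is the maximiser.

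Finally, I would check that the three regions are consistent, i.e. that $A_0\cap A_I=\emptyset$: if $(t,x,y)\in A_0$ then $\partial\Psi_{t,x,y}(0)/\partial u\le0$, whence $\partial\Psi_{t,x,y}(I)/\partial u<0$ by strict monotonicity (using $I>0$), so $(t,x,y)\notin A_I$; thus the three cases in~\eqref{eqn:independent_solution} are mutually exclusive and exhaustive. The argument is routine once strict concavity is assumed, so I expect no serious obstacle. The one point genuinely requiring care is the boundary region $A_I$ and the stationarity equation when $I=\infty$: there the endpoint condition at $u=I$ must be read as a limit $u\to\infty$, in the spirit of the excess--of--loss analysis in Proposition~\ref{prop:xl_sahara}, since $u=I$ is then not an attained value.
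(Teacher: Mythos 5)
Your proposal is correct and takes essentially the same route as the paper: compute the derivative $\pderiv{m(t,y,u)}u - A(x)\,\sigma(t,y,u)\pderiv{\sigma(t,y,u)}u$, use concavity (a decreasing derivative) to settle the two endpoint cases and deduce $A_0\cap A_I=\emptyset$, and identify the interior maximiser with the unique stationary point. Your closing remark about reading the endpoint condition as a limit when $I=\infty$ is a sensible refinement that the paper leaves implicit (it treats $[0,\infty]$ as compact throughout), but it does not alter the argument.
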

\begin{proof}
Since $\Psi_{t,x,y}(u)$ is continuous in $u$, it admits a unique maximiser in the compact set $[0,I]$. The derivative is
\[ \pderiv{m(t,y,u)}u - \onehalf A(x) \pderiv{\sigma^2(t,y,u)}u
=\pderiv{m(t,y,u)}u - A(x) \sigma(t,y,u)\pderiv{\sigma(t,y,u)}u \;.\]
If $(t,x,y)\in A_0$, then $\pderiv{\Psi_{t,x,y}(0)}u\le0$ and $\Psi_{t,x,y}(u)$ is decreasing in $[0,I]$, because it is concave; hence $u^*(t,x,y)=0$ is optimal $\forall(t,x,y)\in A_0$. Now notice that $A_0\cap A_I=\emptyset$, because of the concavity of $\Psi_{t,x,y}(u)$. If $(t,x,y)\in A_I$, then $\pderiv{\Psi_{t,x,y}(1)}u\ge0$ and $\Psi_{t,x,y}(u)$ is increasing in $[0,I]$, therefore it reaches the maximum in $u^*(t,x,y)=I$. Finally, if $(t,x,y)\in [0,T]\times\R \setminus (A_0\cup A_I)$, the maximiser coincides with the unique stationary point $\hat{u}(t,x,y)\in(0,I)$.
\end{proof}

The main consequence of the preceding result is that the reinsurance and the investment decisions depend on each other only via the surplus process and not via the parameters.

Now we specialize Propositions~\ref{prop:sahara} and~\ref{prop:xl_sahara} to the special case $\sigma_1(t,x) = 0$.

\begin{corollary}
\label{corollary:prop_nullsigma1}
Suppose that $\sigma_1(t,x) = 0$ and consider the case of proportional
reinsurance~\eqref{eqn:noncheap_model}. The optimal retention level is given by
\begin{equation*}
u^*(x)=\frac{q}{\sigma_0^2A(x)}\land1\;.
\end{equation*}
\end{corollary}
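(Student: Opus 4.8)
The plan is to obtain the corollary as a direct specialisation of Proposition~\ref{prop:sahara} to the degenerate case $\sigma_1(t,y)=0$, so the only work is to simplify the three regions of~\eqref{eqn:u_sahara} and the interior formula. First I would confirm that the hypotheses remain in force. For the non-cheap proportional model~\eqref{eqn:noncheap_model} we have $m(t,y,u)=p-q+qu$, which is affine and hence concave in $u$, and $\sigma(t,y,u)=\sigma_0 u\ge 0$, which is affine and hence convex and non-negative on $[0,1]$. Thus Lemma~\ref{lemma:sahara_concavity} applies, the maximiser is unique, and I may read $u^*$ directly off~\eqref{eqn:u_sahara}.

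Next I would substitute $\sigma_1=0$ into the definitions of $A_0$ and $A_1$. The threshold defining $A_0$ is $\mu(t,y)\sigma_1(t,y)\sigma_0/(\sigma_1(t,y)^2+\sigma_2(t,y)^2)$, which collapses to $0$; hence $A_0=\{q<0\}$, and since $q>0$ in the non-cheap proportional model this region is empty. The threshold defining $A_1$ becomes $\sigma_0[\sigma_2(t,y)^2A(x)\sigma_0]/\sigma_2(t,y)^2=\sigma_0^2A(x)$, so $A_1=\{q>\sigma_0^2A(x)\}$. Because $A(x)>0$ for any SAHARA utility by~\eqref{eqn:arafun}, this is equivalent to $\{A(x)<q/\sigma_0^2\}$, i.e.\ to $\{q/(\sigma_0^2A(x))>1\}$, on which $u^*=1$.

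On the complement $(A_0\cup A_1)^C=A_1^C=\{q/(\sigma_0^2A(x))\le 1\}$ the interior value in~\eqref{eqn:u_sahara} reduces, upon setting $\sigma_1=0$, to $\sigma_2(t,y)^2 q/(\sigma_0^2\sigma_2(t,y)^2 A(x))=q/(\sigma_0^2A(x))$. Combining the two cases gives $u^*=q/(\sigma_0^2A(x))$ whenever this quantity does not exceed $1$ and $u^*=1$ otherwise, which is exactly $u^*(x)=q/(\sigma_0^2A(x))\wedge 1$. I would also remark that, once $\sigma_1=0$, neither the interior formula nor the surviving threshold retains any dependence on $\mu(t,y)$ or $\sigma_2(t,y)$, so the optimiser depends on $(t,x,y)$ only through $A(x)$; this justifies the notation $u^*(x)$.

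There is essentially no analytic difficulty here, since the corollary is a pure substitution. The only points demanding care are the two sign facts that make the degenerate regions collapse cleanly: the positivity $q>0$, which empties $A_0$, and the strict positivity of the ARA function $A(x)$ for SAHARA utilities, which lets me rewrite both the $A_1$ threshold and the interior solution purely in terms of $A(x)$ and thereby recognise the minimum $q/(\sigma_0^2A(x))\wedge 1$.
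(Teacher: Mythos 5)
Your proposal is correct and follows exactly the paper's route: the paper likewise derives the corollary as a direct consequence of Proposition~\ref{prop:sahara}, noting that $A_0=\emptyset$ and that formula~\eqref{eqn:u_sahara} simplifies to $q/(\sigma_0^2A(x))\land 1$. You merely spell out the substitutions (the collapse of the $A_0$ threshold to $0$, the $A_1$ threshold to $\sigma_0^2A(x)$, and the interior value to $q/(\sigma_0^2A(x))$) that the paper leaves to the reader, so there is no substantive difference.
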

\begin{proof}
It is a direct consequence of the Propositions~\ref{prop:sahara}. In fact, the reader can easily verify that $A_0=\emptyset$ and the formula~\eqref{eqn:u_sahara} simplifies as above.
\end{proof}
As expected, the optimal retention level is proportional to the reinsurance
cost and inversely proportional to the risk aversion. Moreover, reinsurance 
is only bought for wealth not too far from $d$ (recall
equation~\eqref{eqn:arafun}). Note that the optimal strategy is independent on
$t$ and $y$, i.e. it is only affected by the current wealth.
Finally, full
reinsurance is never optimal. 

%In the case of
%proportional reinsurance, the derivative is
%$q - \onehalf A(x) \sigma_0$. This implies that $u = 1$ (no reinsurance) if
%$A(x) \le 2 q/\sigma_0$ and $u = 0$ (full reinsurance) otherwise. That is,
%there is a surplus level above which full reinsurance is taken and no
%reinsurance is taken below that point.

\begin{corollary}
Suppose that $\sigma_1(t,x) = 0$ and consider excess-of-loss
reinsurance~\eqref{eqn:XL_model}. The optimal retention level is given by
\begin{equation*}
u^*(x)=\frac{\theta}{A(x)}\;.
\end{equation*}
\end{corollary}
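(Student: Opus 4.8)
The plan is to specialise the excess-of-loss objective \eqref{eqn:XL_pb} to the degenerate case $\sigma_1(t,y) = 0$ and to extract the maximiser directly from the first-order condition, rather than relying on the concavity hypothesis of Proposition~\ref{prop:xl_sahara}. Setting $\sigma_1(t,y) = 0$ kills the mixed term $2\mu(t,y)\sigma_1(t,y)\sqrt{\int_0^u 2z\bar F(z)\id z}$ in the numerator, while the denominator $\sigma_1(t,y)^2 + \sigma_2(t,y)^2$ collapses to $\sigma_2(t,y)^2$ and cancels the surviving factor $\sigma_2(t,y)^2$ in the numerator. The objective then reduces to
\[
g(u) = \theta\int_0^u \bar F(z)\id z - A(x)\int_0^u z\bar F(z)\id z\;,
\]
which no longer depends on $t$, $y$, $\mu$ or $\sigma_2$; the optimal retention will therefore be a function of $x$ alone, as claimed.

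Next I would differentiate, using $\frac{\dd}{\dd u}\int_0^u \bar F(z)\id z = \bar F(u)$ and $\frac{\dd}{\dd u}\int_0^u z\bar F(z)\id z = u\bar F(u)$, to obtain
\[
g'(u) = \bar F(u)\bigl(\theta - A(x)\,u\bigr)\;.
\]
By the standing hypothesis $F(z) < 1$ for every $z \in [0,\infty)$, the tail $\bar F(u)$ is strictly positive, so the sign of $g'(u)$ matches that of $\theta - A(x)u$. As $A(x) > 0$, this is positive on $[0,\theta/A(x))$ and negative on $(\theta/A(x),\infty)$; hence $g$ increases and then decreases, and its unique maximiser is $u^*(x) = \theta/A(x)$. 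Equivalently, one checks that the exclusion set $A_0$ of Proposition~\ref{prop:xl_sahara} becomes $\{\theta \le 0\} = \emptyset$ when $\sigma_1(t,y) = 0$, and that equation \eqref{eqn:XL_nullderivative} collapses to $\theta\sigma_2(t,y)^2 = \sigma_2(t,y)^2 A(x)u$, whose unique root is $\theta/A(x)$.

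I expect no real analytic obstacle: the argument is a direct substitution followed by a one-line sign analysis. The only point worth flagging is that Proposition~\ref{prop:xl_sahara} was stated under strict concavity of \eqref{eqn:XL_pb} in $u$, which need not persist globally once $\sigma_1 = 0$. The sign analysis of $g'$ above avoids this difficulty altogether, since the strict positivity of $\bar F$ forces $g'$ to change sign exactly once; this is precisely the weaker situation in which equation \eqref{eqn:XL_nullderivative} has a unique solution, so the conclusion holds without assuming concavity.
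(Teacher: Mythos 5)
Your proof is correct, and it takes a genuinely different route from the paper. The paper's proof is a two-line appeal to Proposition~\ref{prop:xl_sahara}: with $\sigma_1 = 0$ the set $A_0$ reduces to $\{\theta \le 0\} = \emptyset$, and equation \eqref{eqn:XL_nullderivative} collapses to $\theta\sigma_2^2 = \sigma_2^2 A(x) u$, giving $u^* = \theta/A(x)$ --- exactly the route you sketch in your ``equivalently'' remark. Your primary argument instead verifies everything from scratch: specialising \eqref{eqn:XL_pb} gives $g(u) = \theta\int_0^u \bar F(z)\id z - A(x)\int_0^u z\bar F(z)\id z$, whose derivative $g'(u) = \bar F(u)\bigl(\theta - A(x)u\bigr)$ changes sign exactly once because $\bar F > 0$ everywhere. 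This buys you something real: Proposition~\ref{prop:xl_sahara} is stated under strict concavity of \eqref{eqn:XL_pb}, a hypothesis the corollary in the paper does not restate and which can genuinely fail here --- for light-tailed claims (say $\bar F(u) = \e^{-u^2}$) one has $g'(u) \to 0^-$ as $u \to \infty$, so $g'$ must eventually increase and $g$ is not globally concave. Your sign analysis shows the conclusion is independent of that hypothesis, which is precisely the content of the paper's subsequent proposition relaxing concavity to uniqueness of the root of \eqref{eqn:XL_nullderivative}; in effect you have proved the corollary via that weaker result rather than via Proposition~\ref{prop:xl_sahara} as the paper does. Your argument is also complete at the boundary: strict monotonicity of $g$ on either side of $\theta/A(x)$ rules out both $u = 0$ and $u = \infty$, so the maximiser over $[0,\infty]$ is the interior point claimed.
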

\begin{proof}
Using Proposition~\ref{prop:xl_sahara} we readily check that $A_0=\emptyset$ and by equation~\eqref{eqn:XL_nullderivative} we get the explicit solution $u^*(x)$.
\end{proof}
Again, the retention level turns out to increase with the reinsurance
safety loading and decrease with the risk aversion parameter. In addition,
it increases with the distance between the current wealth $x$ and the
threshold $d$.

%--------------------------------------------------------------------------------
%	NUMERICAL RESULTS
%--------------------------------------------------------------------------------

\section{Numerical results}
\label{section:numerical}

In this section we provide some numerical examples based on Proposition \ref{prop:sahara}. All the simulations are performed according to the parameters in Table \ref{tab:parameters} below, unless indicated otherwise.

\begin{table}[H]
\caption{Simulation parameters}
\label{tab:parameters}
\centering
\begin{tabular}{ll}
\toprule
\textbf{Parameter} & \textbf{Value}\\
\midrule
$\mu$ & $0.08$\\
$\sigma_1$ & $0.5$\\
$\sigma_2$ & $0.5$\\
$\sigma_0$ & $0.5$\\
$q$ & $0.05$\\
$x$ & $1$\\
$a$ & $1$\\
$b$ & $1$\\
$d$ & $0$\\
\bottomrule
\end{tabular}
\end{table}

The choice of constant parameters may be considered as fixing
$(t,y,x)\in[0,T]\times\R^2$. Note that the strategy depends on $\{Y_t\}$ via
the parameters only. Now we illustrate how the strategy depends on the
different parameters. In the following figures, the solid line shows the
reinsurance strategy, the dashed line the investment strategy.

First, we analyse how the volatility coefficients of the risky asset influence
the optimal strategies. In Figures \ref{img:sigma1} and \ref{img:sigma2}
we notice very different behaviour.
On the one hand, the retention level $u^*$ is convex with respect to $\sigma_1$ up to a certain threshold, above which null reinsurance is optimal. On the other, when $\sigma_1>0$ (see Figure \ref{img:sigma2_a}) $u^*$ is null up to a given point and concave with respect to $\sigma_2$ from that point on. Finally, for $\sigma_1=0$ (see Figure \ref{img:sigma2_b}) the retention level is constant (see Corollary \ref{corollary:prop_nullsigma1}). Let us observe that the regularity of the optimal investment in Figure \ref{img:sigma2_b} is due to the absence of influence from $u^*$ (which remains constant).

\begin{figure}[H]
\centering
\ifpdf
\includegraphics[scale=0.28]{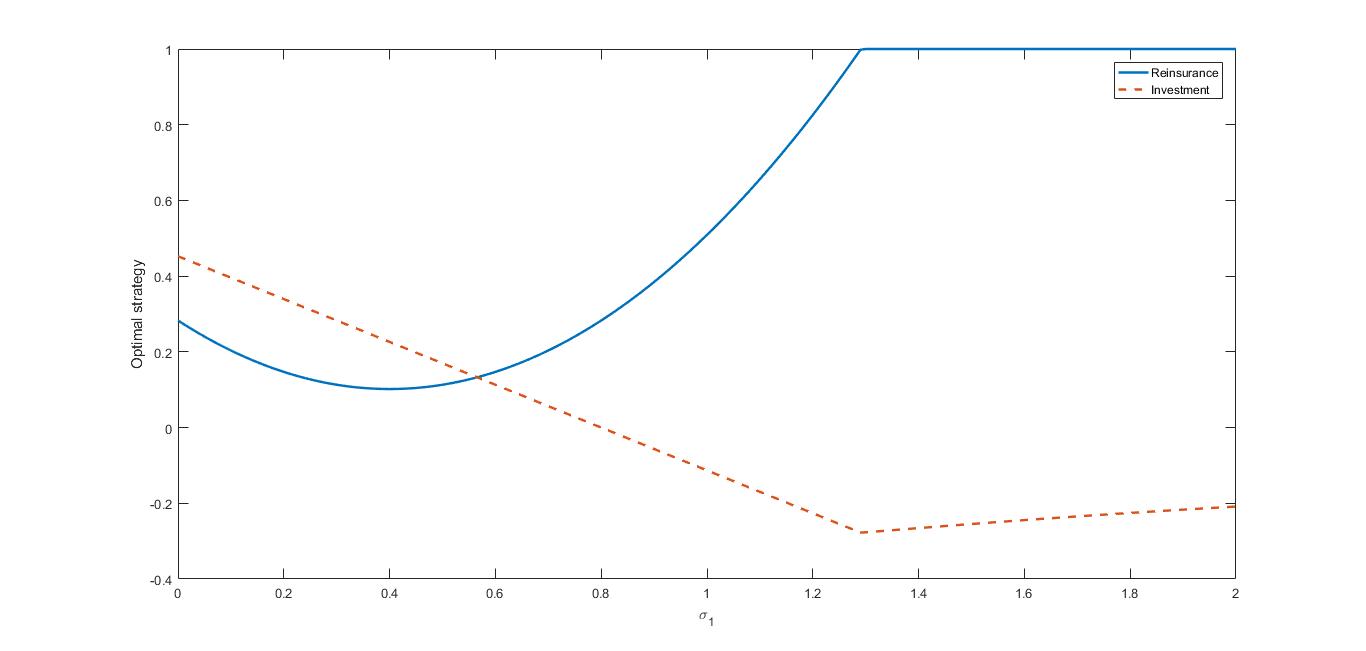}
\else \includegraphics[scale=0.28]{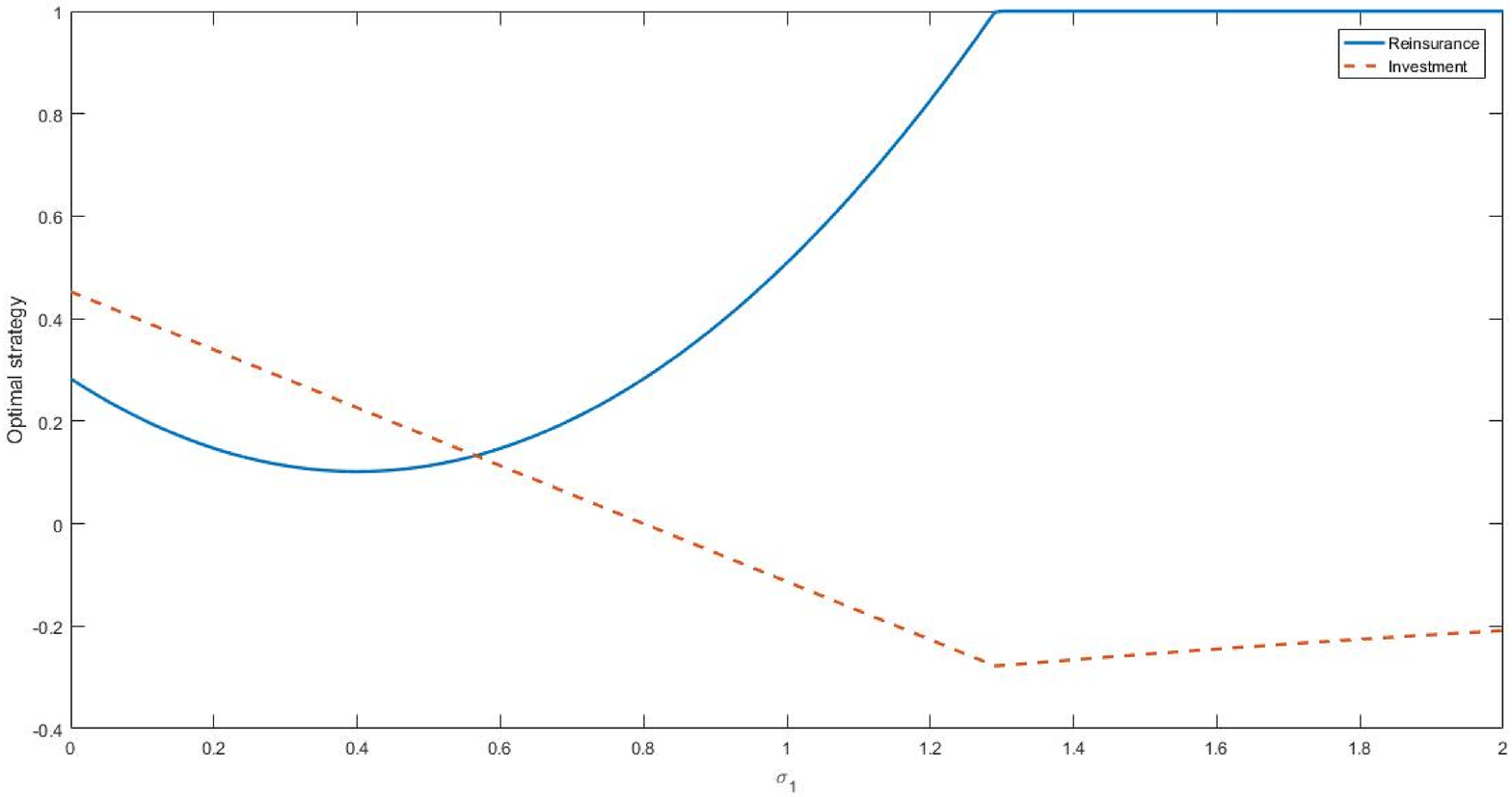}
\fi
\caption{The effect of $\sigma_1$ on the optimal reinsurance-investment strategy.}
\label{img:sigma1}
\end{figure}

\begin{figure}[H]
	\centering
	\begin{subfigure}{1\textwidth} % width of left subfigure
                \ifpdf
		\includegraphics[width=\textwidth]{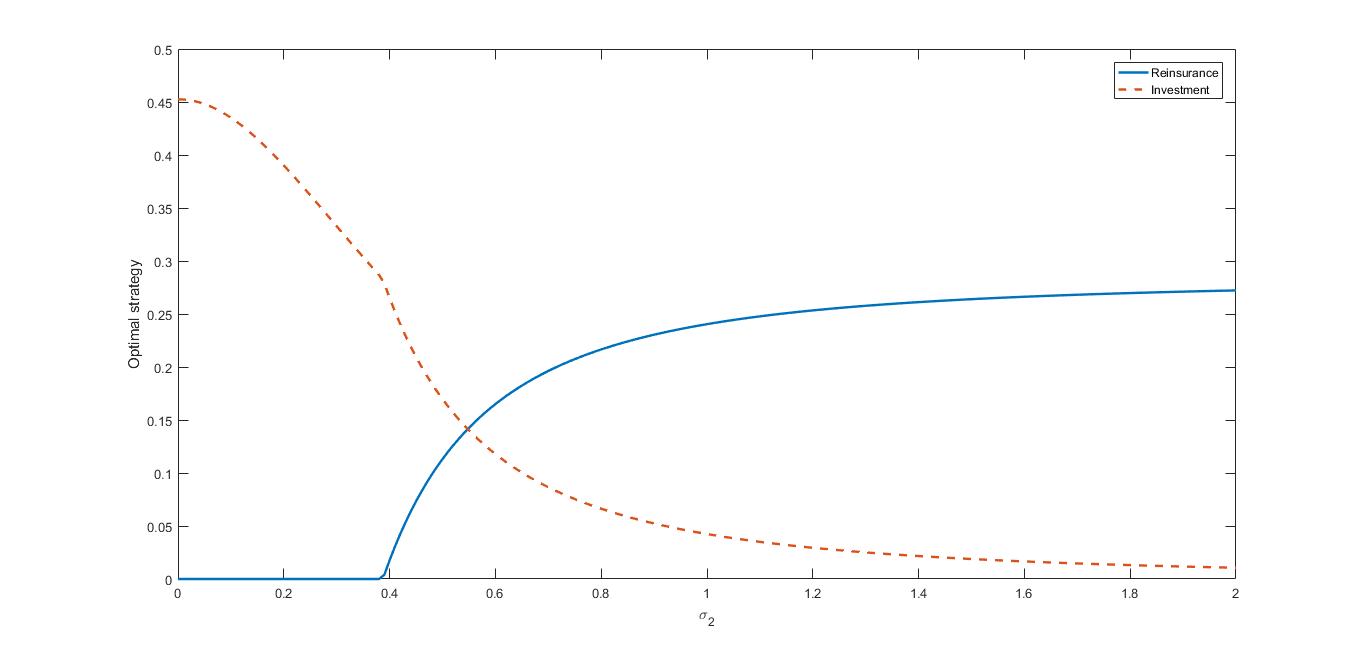}
	\else	\includegraphics[width=\textwidth]{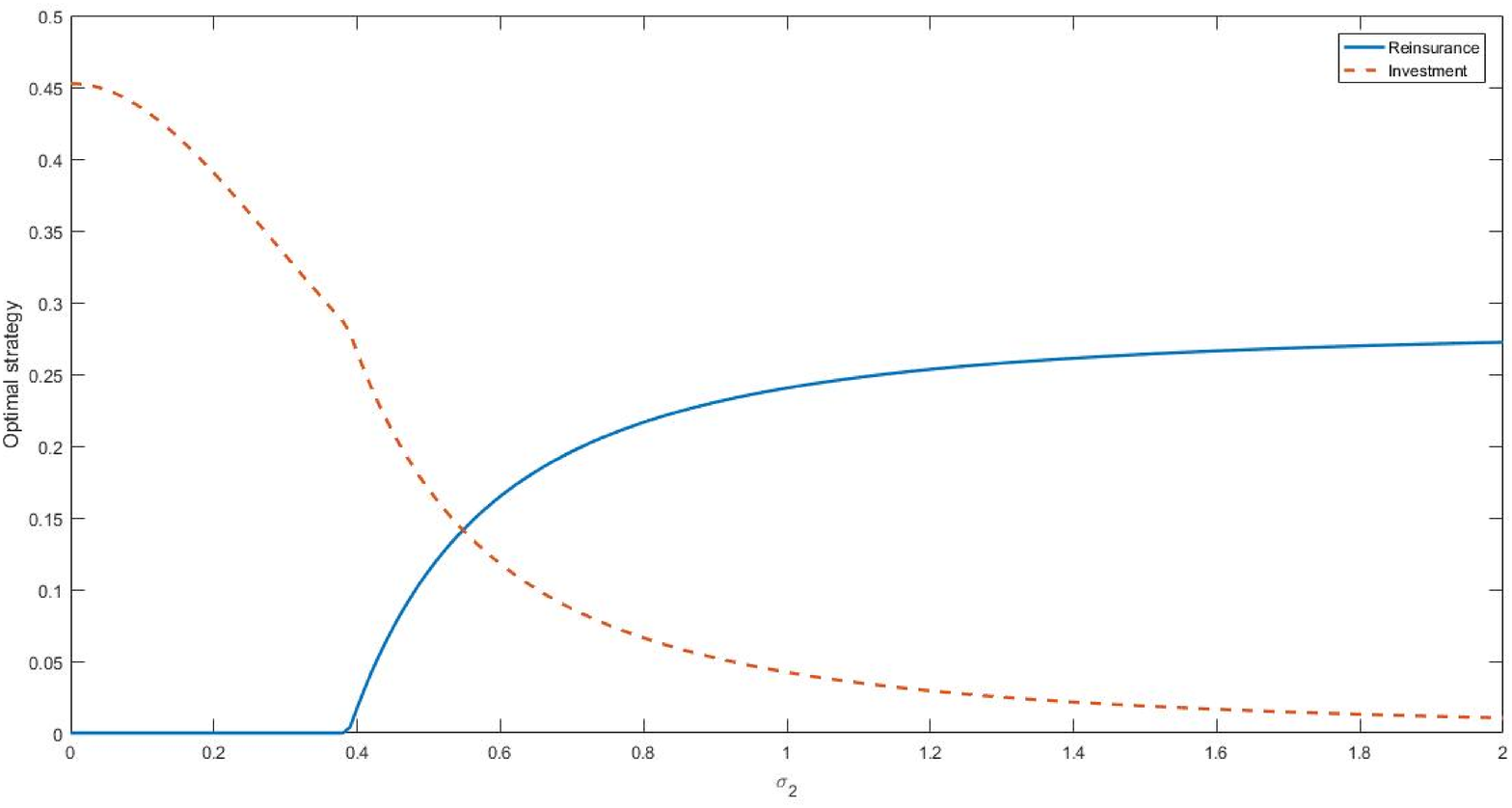}
                \fi
		\caption{Case $\sigma_1>0$} % subcaption
		\label{img:sigma2_a}
	\end{subfigure}
	\vspace{1em} % here you can insert horizontal or vertical space
	\begin{subfigure}{1\textwidth} % width of right subfigure
                \ifpdf
		\includegraphics[width=\textwidth]{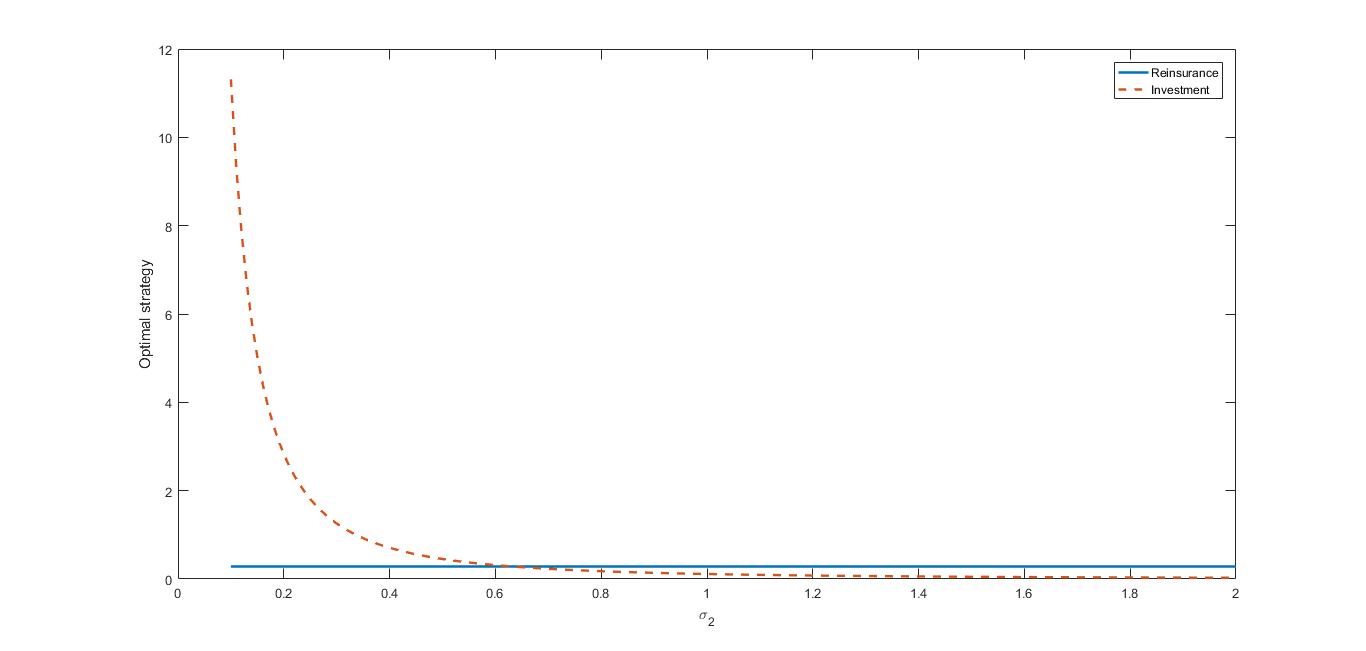}
	\else	\includegraphics[width=\textwidth]{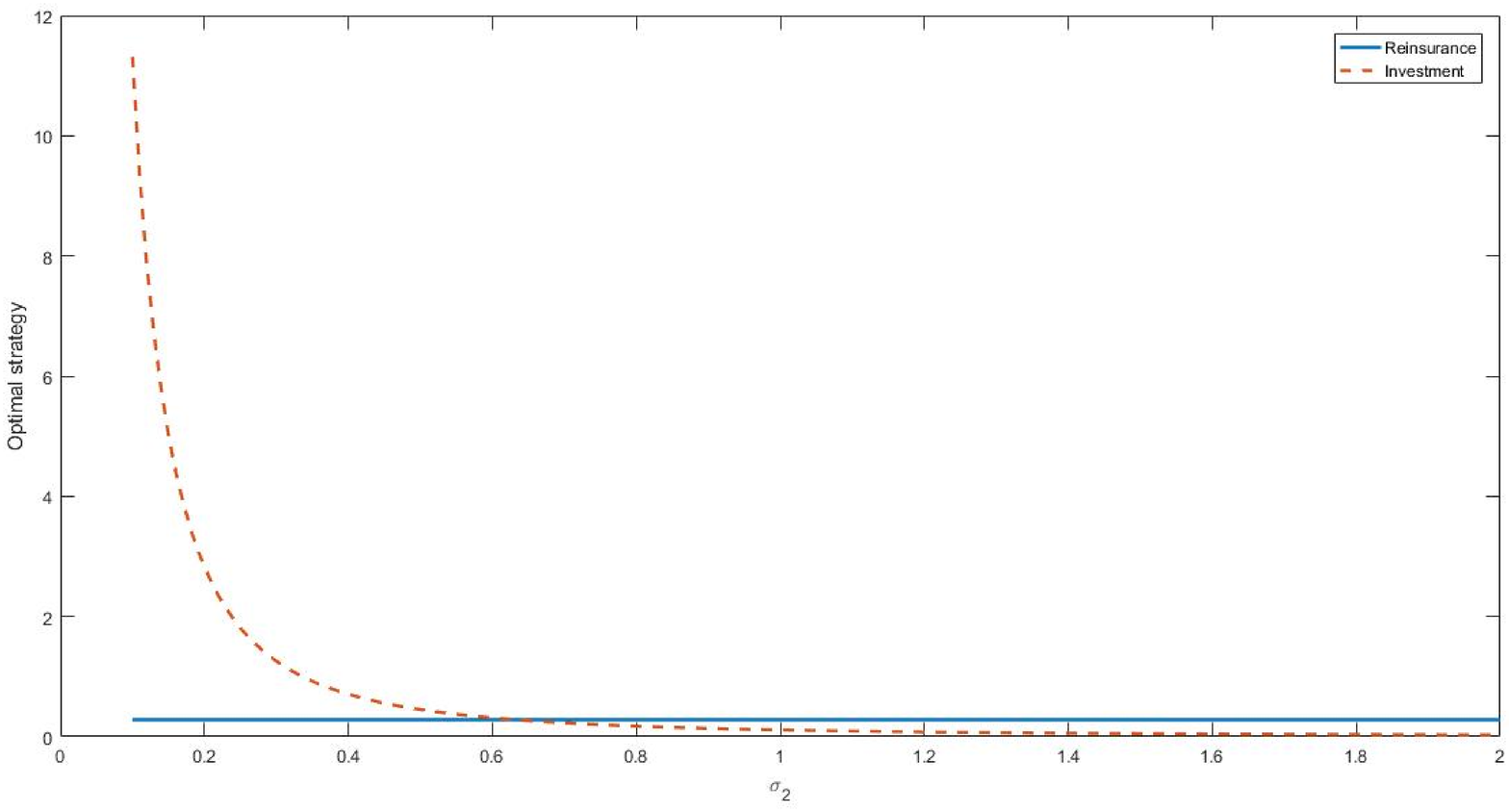}
                \fi
		\caption{Case $\sigma_1=0$} % subcaption
		\label{img:sigma2_b}
	\end{subfigure}
	\caption{The effect of $\sigma_2$ on the optimal reinsurance-investment strategy.} % caption for whole figure
\label{img:sigma2}
\end{figure}

%\begin{figure}[H]
%\centering
%\includegraphics[scale=0.32]{sigma2.jpg}
%\caption{The effect of $\sigma_2$ on the optimal reinsurance-investment strategy.}
%\label{img:sigma2}
%\end{figure}

Now let us focus on Figure \ref{img:sigma0}. When $\sigma_0$ increases the insurer rapidly goes from null reinsurance to full reinsurance, while the investment $a^*$ strongly depends on the retention level $u^*$. Under $\sigma_1>0$ (see Figure \ref{img:sigma0_a}), as long as $u^*=1$, $a^*$ decreases with $\sigma_0$; when $u^*\in(0,1)$ starts decreasing, $a^*$ increases; finally, when $u^*$ stabilises at $0$, then $a^*$ stabilises at the starting level. On the contrary, when $\sigma_1=0$ the investment remains constant and $u^*$ asymptotically goes to $0$.

\begin{figure}[H]
	\centering
	\begin{subfigure}{1\textwidth} % width of left subfigure
		\ifpdf
                \includegraphics[width=\textwidth]{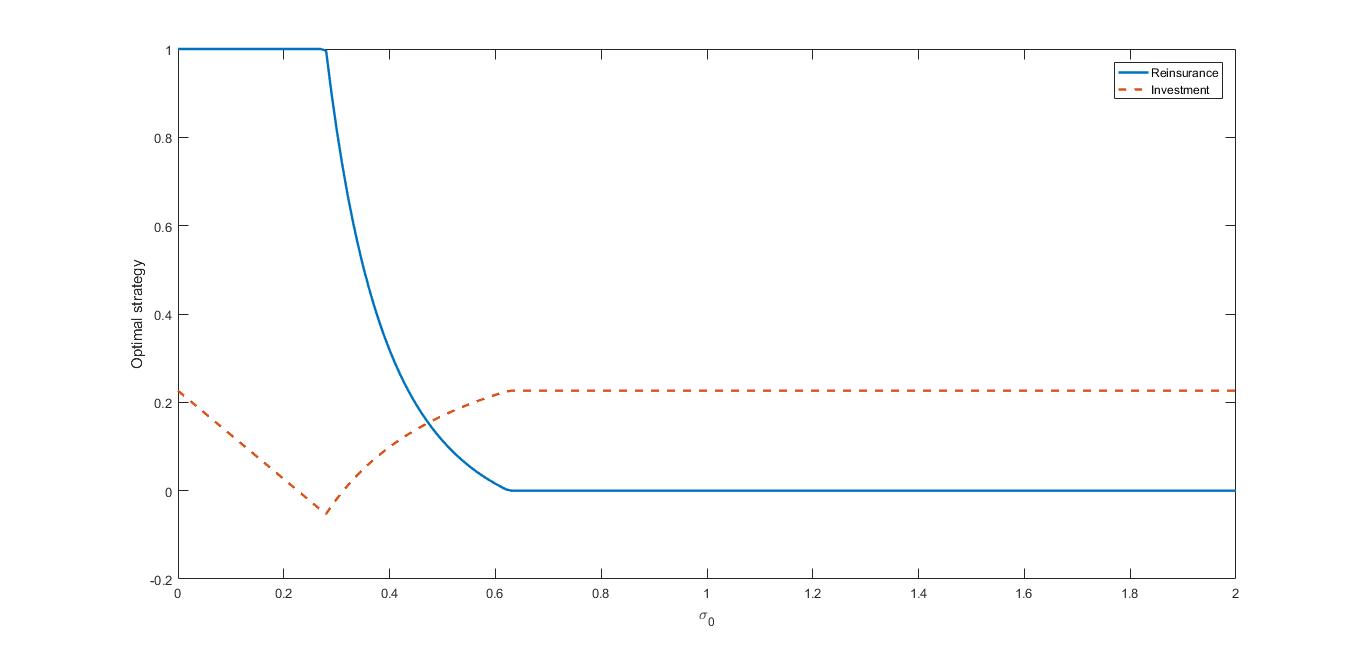}
         \else  \includegraphics[width=\textwidth]{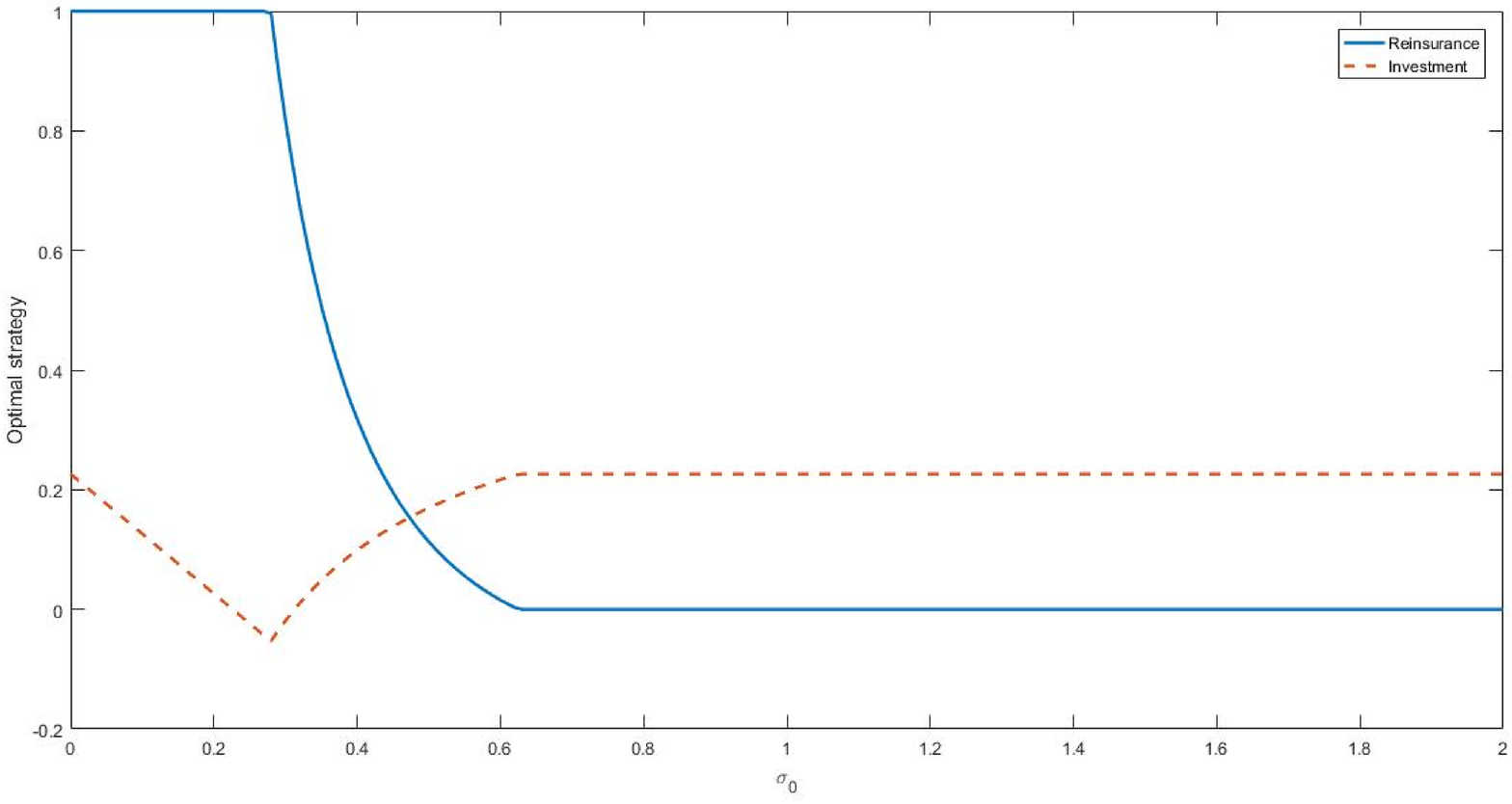}
                \fi
		\caption{Case $\sigma_1>0$} % subcaption
		\label{img:sigma0_a}
	\end{subfigure}
	\vspace{1em} % here you can insert horizontal or vertical space
	\begin{subfigure}{1\textwidth} % width of right subfigure
		\ifpdf
                \includegraphics[width=\textwidth]{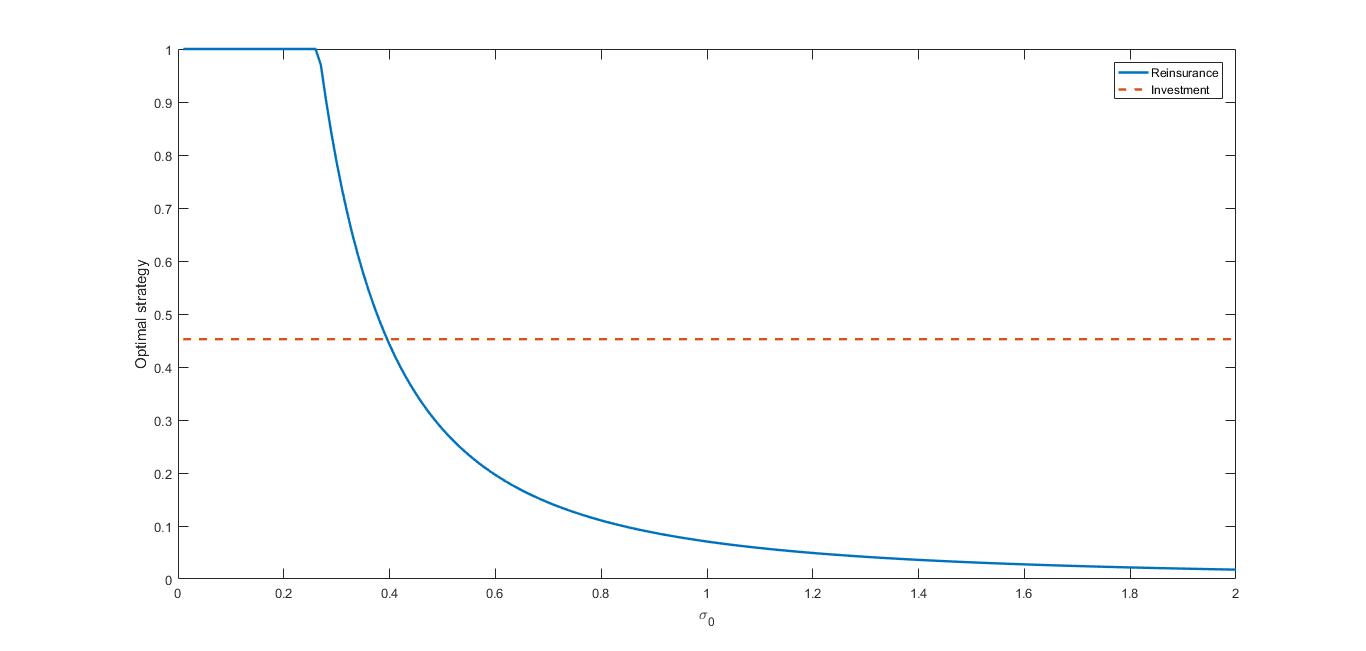}
         \else  \includegraphics[width=\textwidth]{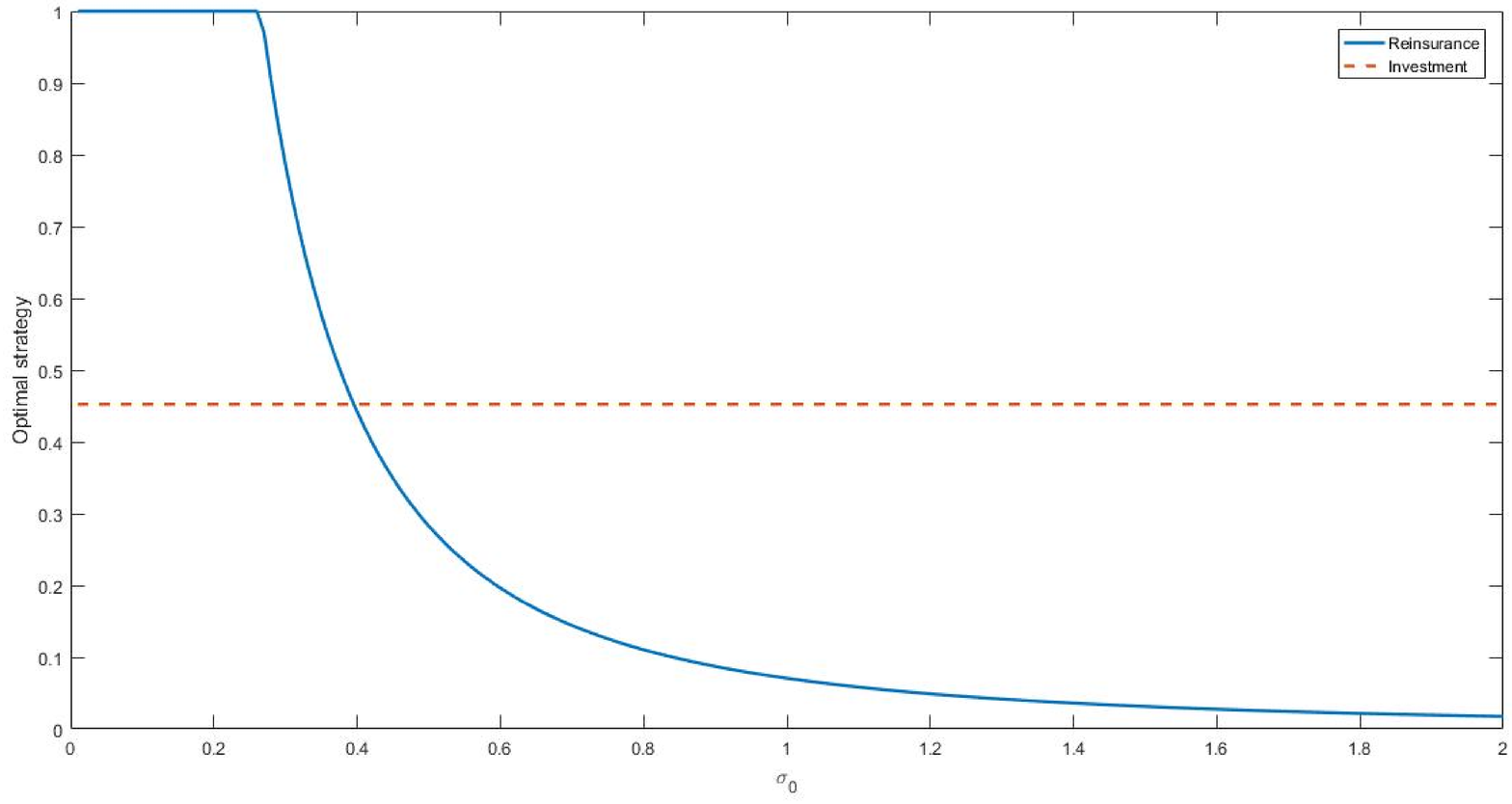}
                \fi
		\caption{Case $\sigma_1=0$} % subcaption
		\label{img:sigma0_b}
	\end{subfigure}
	\caption{The effect of $\sigma_0$ on the optimal reinsurance-investment strategy.} % caption for whole figure
\label{img:sigma0}
\end{figure}

%\begin{figure}[H]
%\centering
%\includegraphics[scale=0.32]{sigma0.jpg}
%\caption{The effect of $\sigma_0$ on the optimal reinsurance-investment strategy.}
%\label{img:sigma0}
%\end{figure}

As pointed out in the previous section, the current wealth level $x$ plays an
important role in the evaluation of the optimal strategy and this is still
true under the special case $\sigma_0=0$. In Figure \ref{img:x} below we
illustrate the optimal strategy as a function of $x$. Both the reinsurance and
the investment strategy are symmetric with respect to $x=d=0$. Moreover, they
both increase when $x$ moves away from the threshold wealth $d$. This is not
surprising because the risk aversion decreases with the distance to
$d$.

\begin{figure}[H]
\centering
\ifpdf
\includegraphics[scale=0.28]{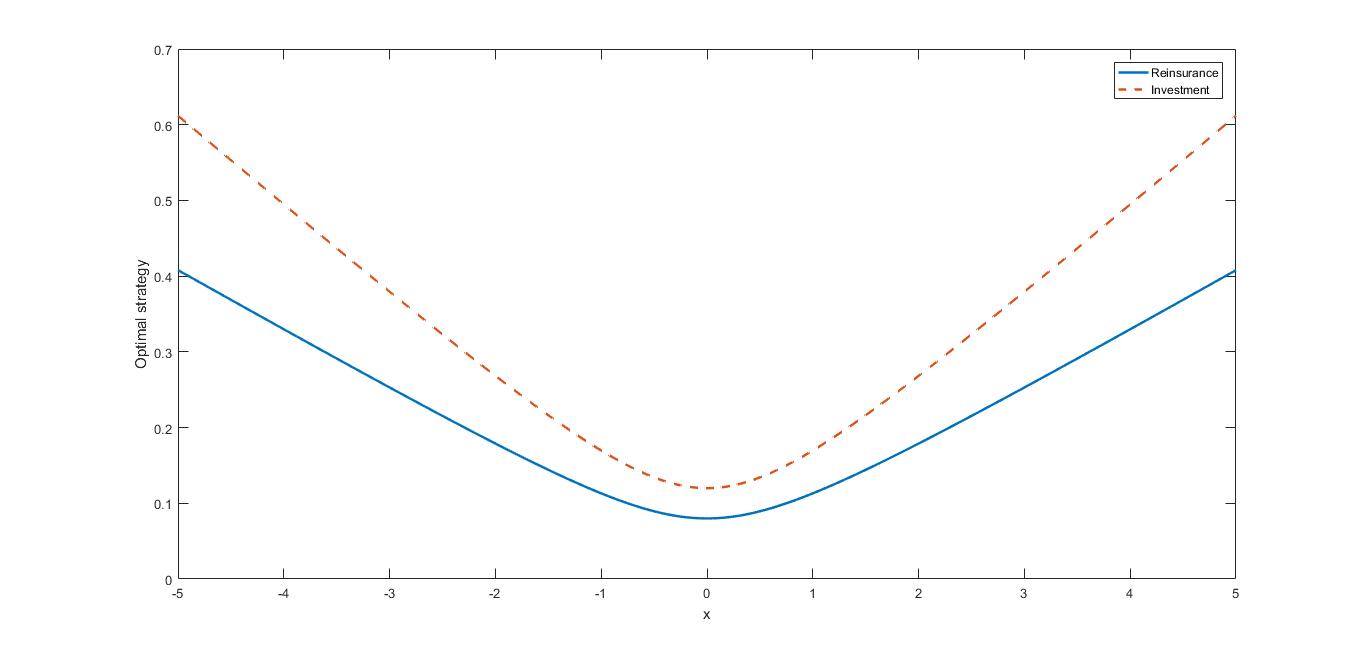}
\else\includegraphics[scale=0.28]{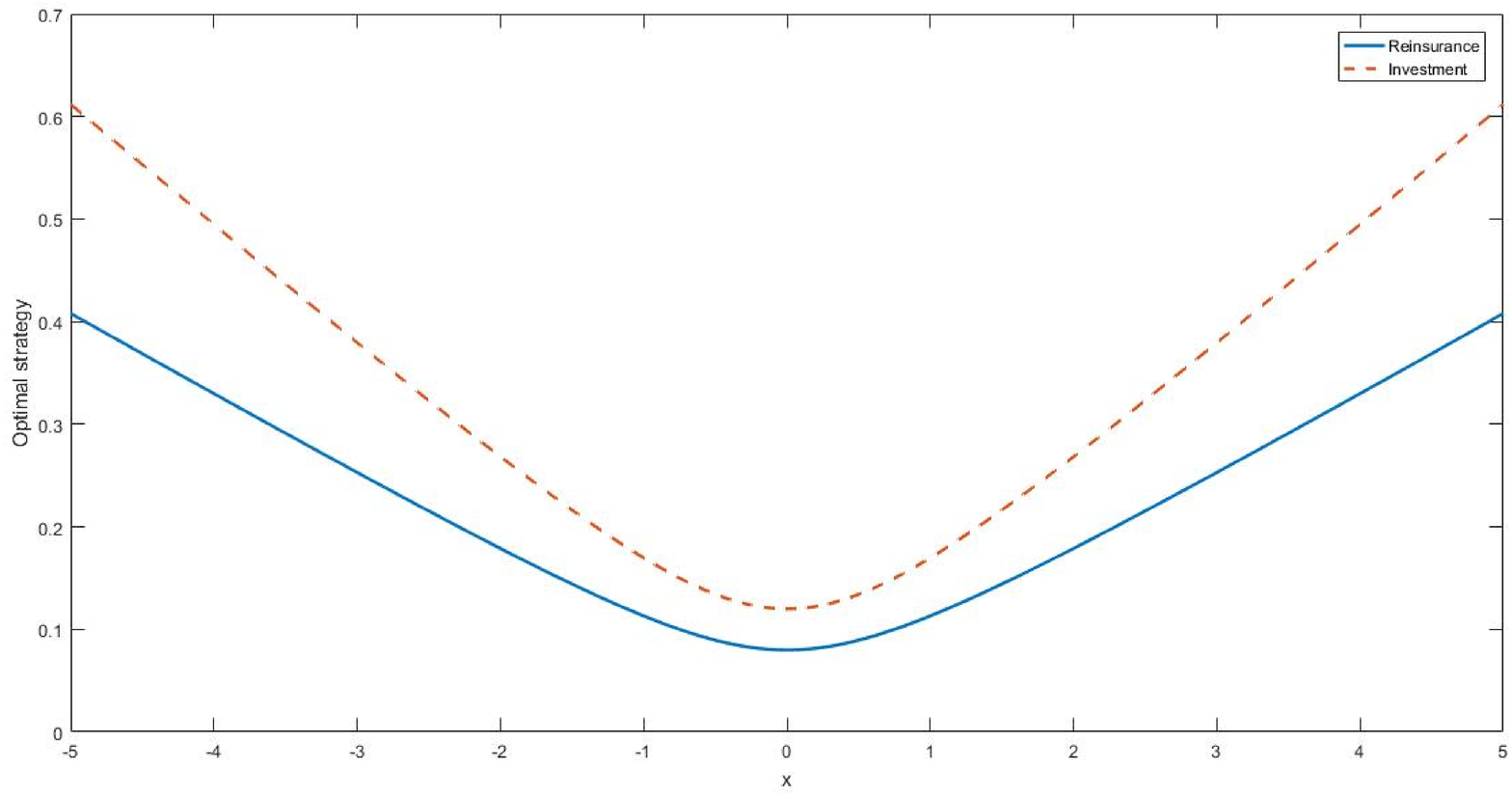}
\fi
\caption{The effect of the current wealth $x$ on the optimal reinsurance-investment strategy.}
\label{img:x}
\end{figure}

In the next Figure \ref{img:utility} we investigate the optimal strategy
reaction to modifications of the utility function. As expected, the higher is
the risk aversion, the larger is the optimal protection level and the lower is the investment in the risky asset (see Figure \ref{img:a}). When $b$ increases, both the investment and the retention level monotonically increase (see Figure \ref{img:b}). Let us recall that $b\to0$ corresponds to HARA utility functions. Finally, by Figure \ref{img:d} we notice that any change of $d$ produces the same result of a variation in current wealth $x$ (see Figure \ref{img:x}).

\begin{figure}[H]
	\centering
	\begin{subfigure}{1\textwidth} % width of left subfigure
		\ifpdf
                \includegraphics[width=0.75\textwidth]{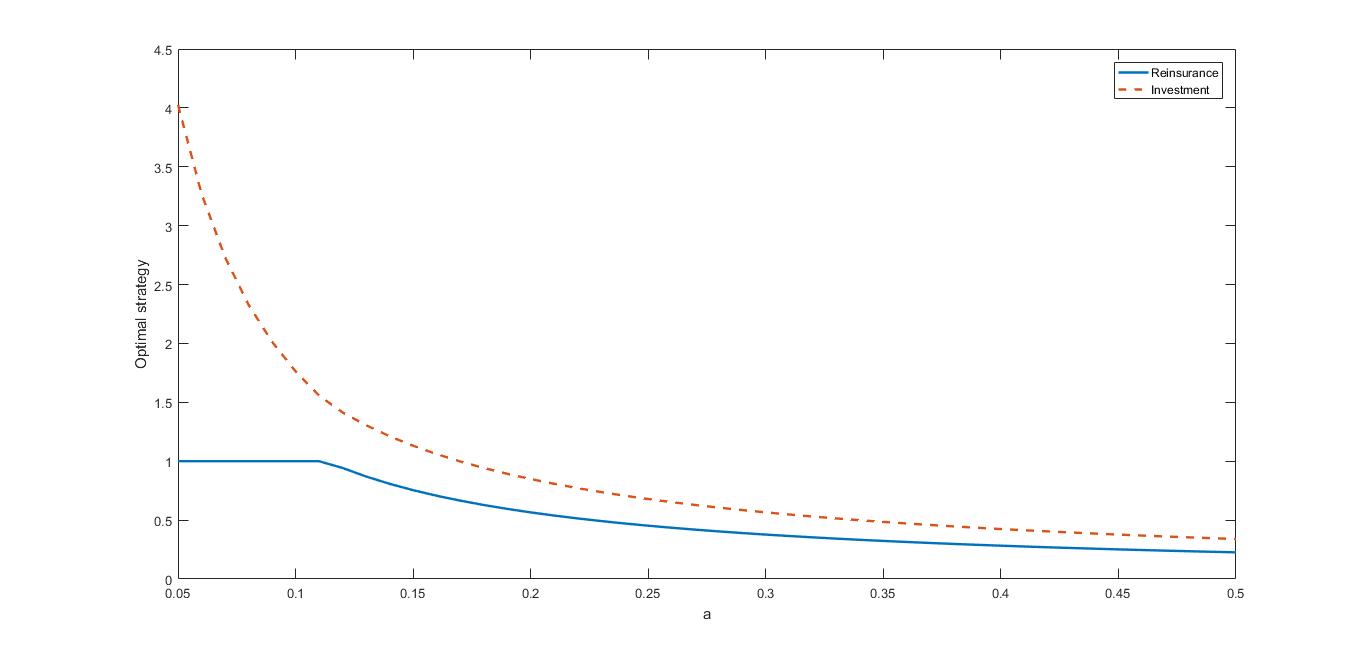}
        \else   \includegraphics[width=0.75\textwidth]{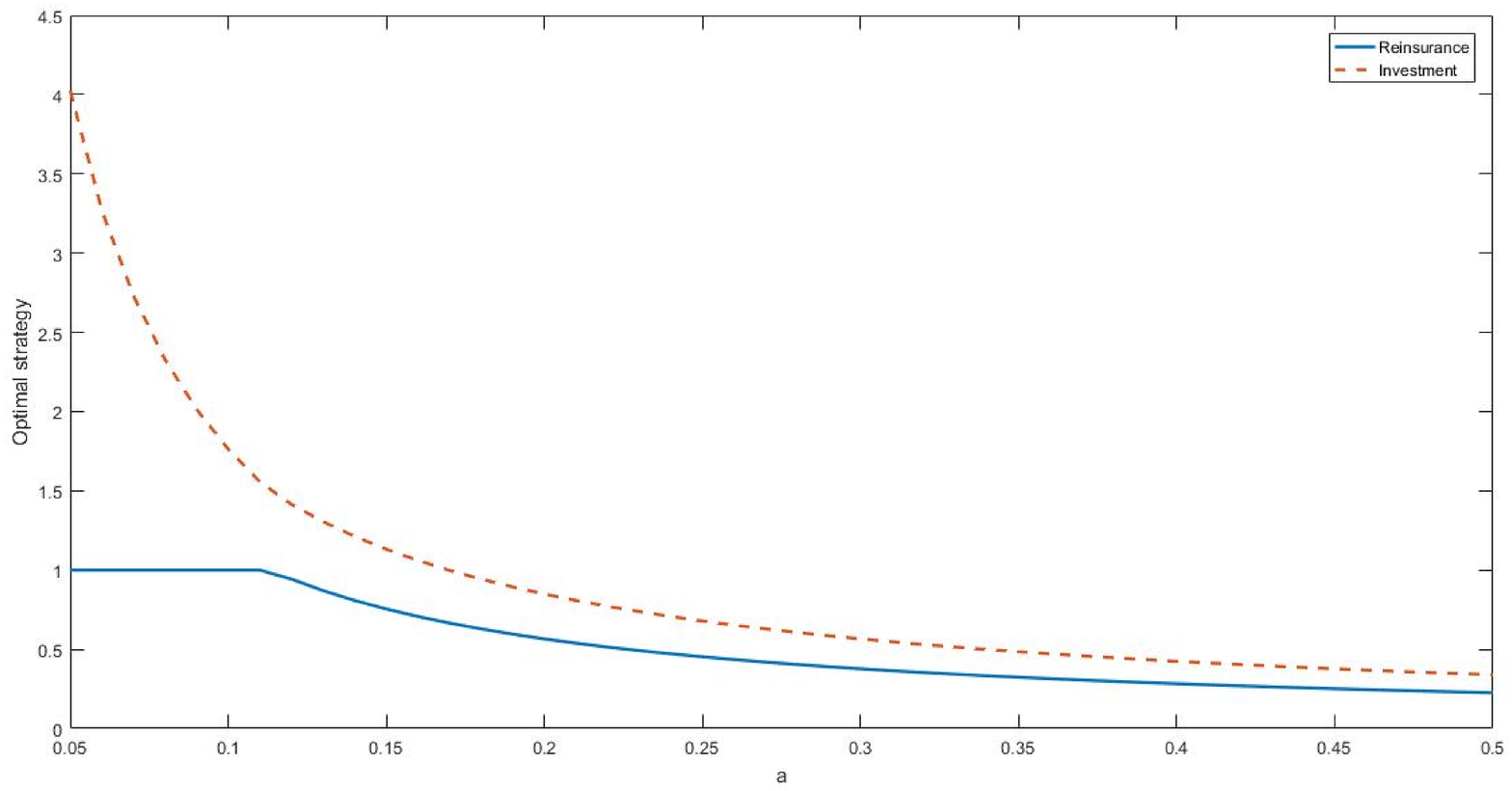}
                \fi
		\caption{The effect of the risk aversion on the optimal strategy.} % subcaption
		\label{img:a}
	\end{subfigure}
	\vspace{1em} % here you can insert horizontal or vertical space
	\begin{subfigure}{1\textwidth} % width of right subfigure
		\ifpdf
                \includegraphics[width=0.75\textwidth]{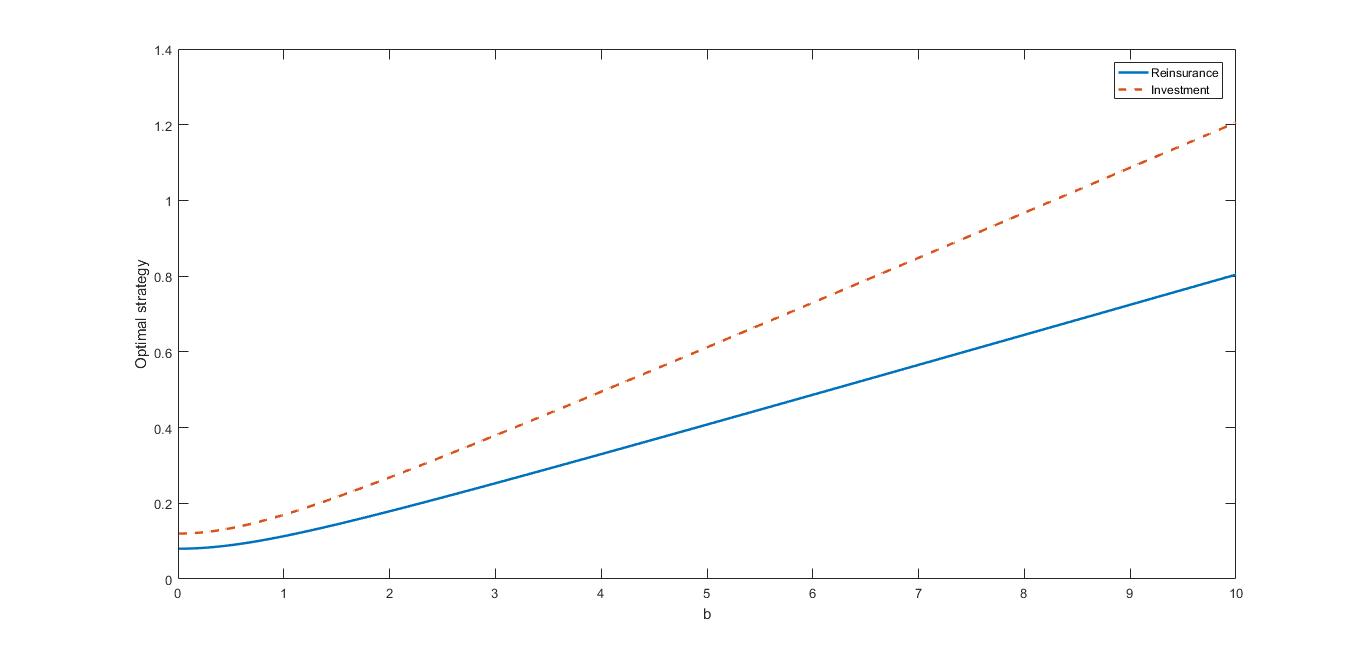}
        \else   \includegraphics[width=0.75\textwidth]{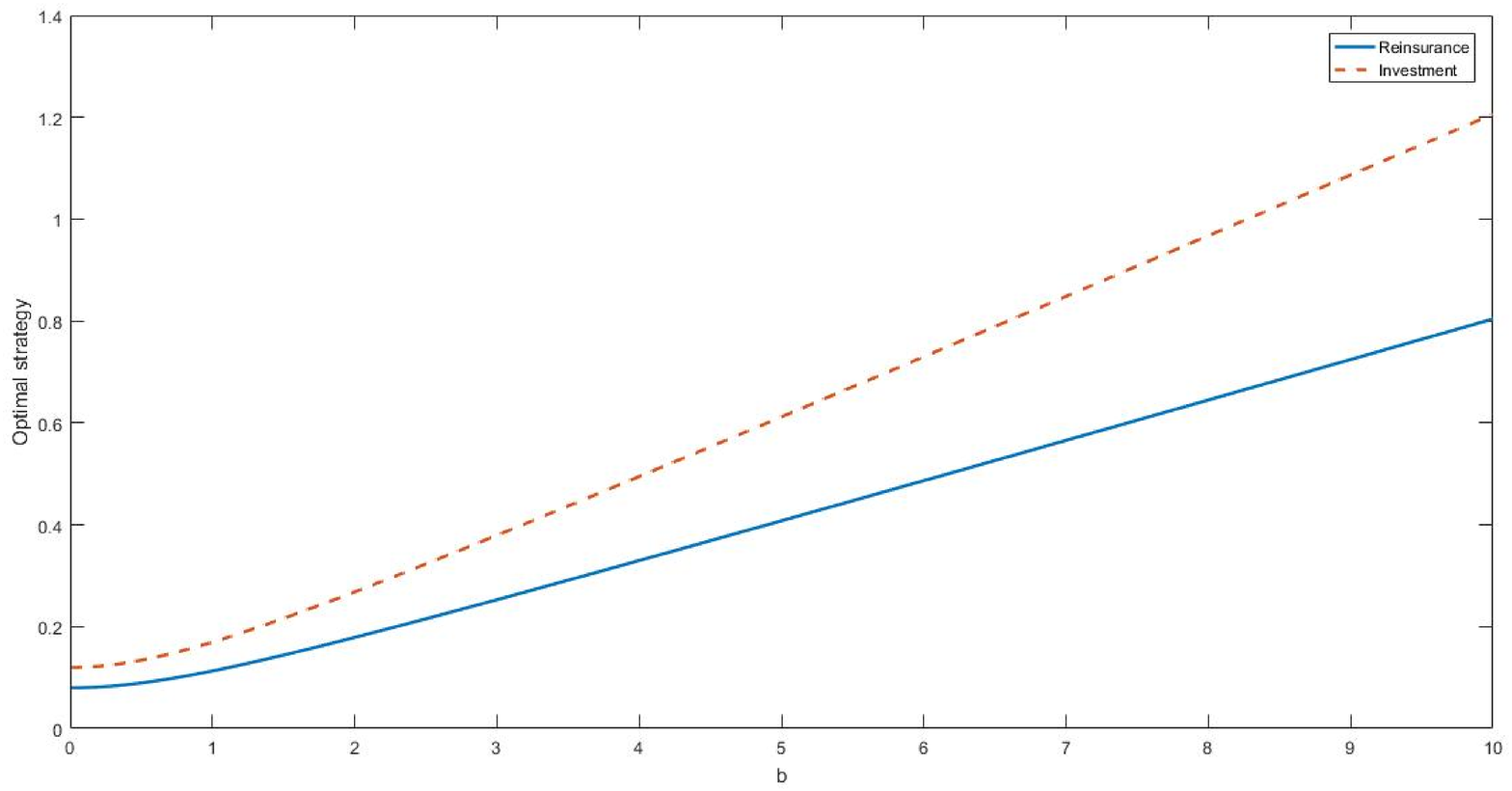}
                \fi
		\caption{The effect of the scale parameter on the optimal strategy.} % subcaption
		\label{img:b}
	\end{subfigure}
	\vspace{1em} % here you can insert horizontal or vertical space
	\begin{subfigure}{1\textwidth} % width of right subfigure
                \ifpdf
		\includegraphics[width=0.75\textwidth]{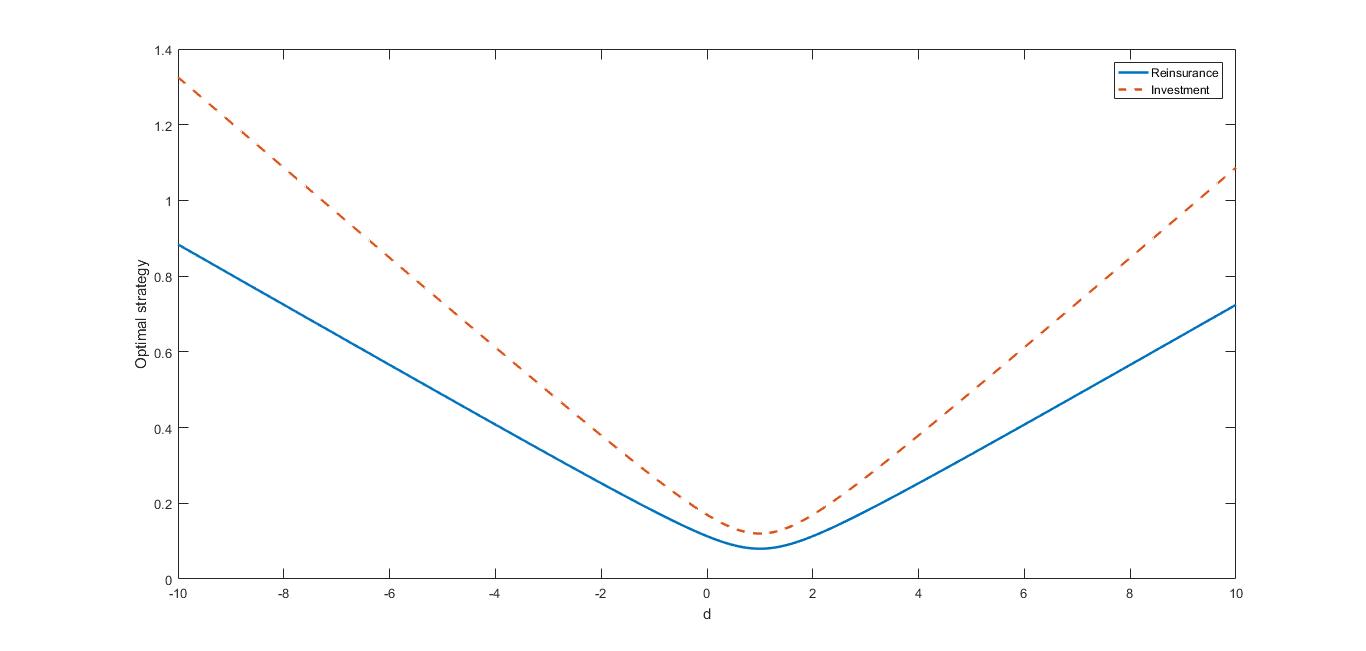}
	\else	\includegraphics[width=0.75\textwidth]{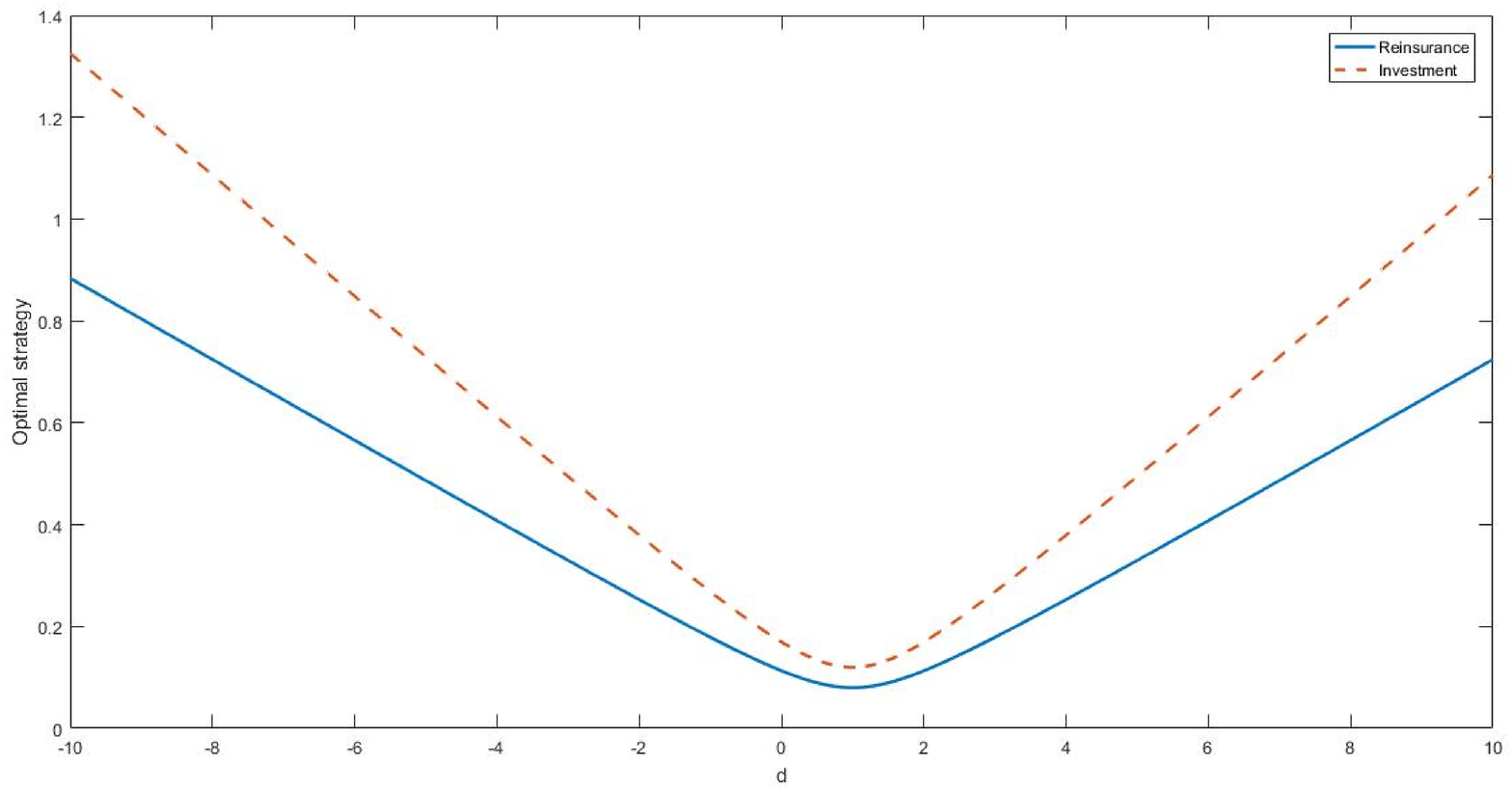}
                \fi
		\caption{The effect of the wealth threshold on the optimal strategy.} % subcaption
		\label{img:d}
	\end{subfigure}
	\caption{The effect of the SAHARA utility function parameters on the optimal reinsurance-investment strategy.} % caption for whole figure
\label{img:utility}
\end{figure}

%--------------------------------------------------------------------------------
%	BIBLIOGRAPHY
%--------------------------------------------------------------------------------

\newpage
\bibliographystyle{abbrv}
\bibliography{biblio}

\end{document}

%%% Local Variables: 
%%% mode: latex
%%% TeX-master: t
%%% End: 